\newtheorem{proposition}{Proposition}
\newtheorem{assumption}{Assumption}
\newtheorem{remark}{Remark}
\newtheorem{lemma}{Lemma}
\newenvironment{proof}[1][Proof: ] {\textit{#1}}{}
\newcommand{\N}{\mathbb{N}}
\newcommand{\R}{\mathbb{R}}
\newcommand{\E}{\mathbb{E}}
\newcommand{\1}{1\! \mathrm{l}}
\newcommand{\Pro}{\mathbb{P}}
\newcommand{\D}{\mathcal{D}}
\newcommand{\U}{\mathcal{U}}
\newcommand{\M}{\mathcal{M}}
\newcommand{\Po}{\mathcal{P}}
\title{Modeling Maximum drawdown Records with Piecewise Deterministic Markov Processe in Capital Markets}
\author{Rolando Rubilar-Torrealba\thanks{Departamento de Industrias, Universidad Técnica Federico Santa María, Valparaíso, Chile; rolando.rubilar@usm.cl}  \quad  \quad \quad  Lisandro Fermin \thanks{CIMFAV, Instituto de Ingeniería Matemática, Universidad de Valpara\'iso, Chile; Modal’X UMR CNRS 9023, Université Paris Nanterre, F92000 Nanterre, France. Email: lisandro.fermin@uv.cl; ljfermin@parisnanterre.fr}
\quad \quad \quad Soledad Torres\thanks{CIMFAV, Instituto de Ingeniería Matemática, Universidad de Valpara\'iso, Chile; Universidad Central de Chile, 8370242 Santiago, Chile . Email:  soledad.torres@uv.cl }
}
\date{December 3, 2024}
\begin{document}
\maketitle
\begin{abstract}
We propose to model the records of the maximum Drawdown in capital markets by means a Piecewise Deterministic Markov Process (PDMP). We derive statistical results such as the mean and variance that describes the sequence of maximum Drawdown records. In addition, we developed a simulation study  and techniques for estimating the parameters governing the stochastic process, using a practical example in the capital market to illustrate the procedure.
\end{abstract}

\keywords{Records process \and financial time series \and Piecewise Deterministic Markov Models \and risk management}

\section{Introduction}
\label{sec1}

One of the financial risk management strategies is to regulate the maximum losses that can occur over a fixed time horizon or maximum drawdown. As the level of risk is dynamic, the incentive structure of managers also changes in the face of the new maximum drawdown (\cite{pospisil2010portfolio}), leading to changes in investment decisions.

The management of financial risk by fund managers typically involves the application of a set of rules defining the behaviour of investments in the context of events that shape financial market prices (cited in references such as \cite{leal2005maximum, zhang2009effect, power2004risk}). Such events can be conditioned by observed losses in the capital markets. 

The strategies associated with the maximum Drawdown of a financial asset as a basic element for risk management generate a direct link between the price structure of financial assets with the statistical theory of extreme values, more specifically with the theory of records (\cite{embrechts2013modelling, gomes2015extreme, landriault2021analysis, li2022parisian}), assuming that the financial information of the financial asset is public knowledge for all members of the capital market.

In this article we link Piecewise Deterministic Markov Process (PDMP) processes with record theory for the particular case of a risk management process known as maximum Drawdown, which is widely used in the financial literature, but can be easily extended to other disciplines. This research aims to carry out a statistical and mathematical analysis of the characteristics of the limit distribution of the process that defines the maximum Drawdown records, based on the PDMP methodology, and to propose methods for estimating the parameters that determine the process.

The PDMP are processes which was originally developed by \cite{DAV} and have had an important relevance in determining phenomena in different areas of science. These processes are characterised by having a certain number of states that change randomly and by having a deterministic part that evolves in each of the states of the process.

The general characteristics of PDMP processes allow to approach areas that have not been deeply developed in the literature and to generate new research opportunities. In particular, record theory allows us to understand processes that evolve over time and to understand the asymptotic behaviour of certain phenomena, where record is the largest (smallest) value in a sequence of values that are observed over time.  

Among recent developments, we can cite \cite{rudnicki2015piecewise} who uses PDMP techniques to set up an analysis framework for biological systems, cell life cycle and other biological applications. Another interesting development is by \cite{kouretas2006parameter} who model biological networks for the study of antibiotic production. Similarly we can see developments in other areas of science such as physics, economics, among others (see \cite{lin2018efficient, schal1998piecewise, schmidli2010conditional}).

The results of the paper show a description of the analytical mean and variance of records obtained from the PDMP process in the context of the maximum Drawdown approach. In addition, simulation techniques and estimation of the parameters of the processes that describes the model were developed, providing important tools for potential applications in different areas of science.

The use of a variety of analytical tools allows for a comprehensive understanding of the stochastic process that governs the sequence of jumps that define the maximum drawdown. Markov chains are a fundamental tool for analysing the evolution of financial series, which have been extensively employed in economic and financial literature (\cite{mendoza2016multivariate,cui2018single, cui2019general}).

This paper is organized as follows. Section 1 presents the Introduction, the Model and preliminaries is presented in the Section 2. The Characterisation of the process is show in Section 3. The Section 4 corresponds to the Simulation and estimation of the process. In Section 5 we apply the methodology developed for an applied case, and the Section 6 corresponds to the Conclusions of this research.

\section{Model and preliminaries}\label{section 2}


The classical theory of records consider a time series $X_0, X_1, \ldots, X_n$ of random variables that can reach a maximum (minimum) value in a period of time. We define an upper and lower record as
$$X_n > \text{max}\left\lbrace X_0, X_1, \ldots, X_{n-1} \right\rbrace, $$
$$X_n < \text{min}\left\lbrace X_0, X_1, \ldots, X_{n-1} \right\rbrace. $$

Maximum \textit{Drawdown} in $T$ time can be informally defined as the largest drop from a peak of the time series to the smallest value of a valley, which implies reaching a new record in that time. Following \cite{magdon2004maximum}, we define the maximum \textit{Drawdown} process as a process driven by a Brownian process that indicates the maximum drop observed in a period in $[0, T]$. 
 
Let $X(t)$, $0 \leq t \leq T$, a Brownian motion with drift given by $X(t)=\sigma W(t) + \mu t$, where $\mu \in \mathbb{R}$ is the trend and $\sigma \geq 0$ is the diffusion parameter. 

The mathematical definition of maximum \textit{Drawdown} ($D$) is 
\begin{equation}\label{ec: Drawdown}
D(T; \mu, \sigma)=\sup_{t \in [0,T]} \left[ \sup_{s \in [0,t]} X(s) - X(t) \right], 
\end{equation}
with $0 \leq s \leq t \leq T$. This definition allows to characterise the maximum drop of a time series when considering a fixed value of $T>0$.

However, we are interested in the process for each time instant over the \textit{records} that it can reach at the maximum \textit{Drawdown}. For this stochastic process we consider the following definition of the occurrence times of a records in the process, such that
\begin{equation}\label{eq: proceso Draw}
T_{i} = \text{inf} \lbrace t>0: D(t) > D(T_{i-1}) \wedge \exists s>t, D(s)-D(t)=0\rbrace, 
\end{equation}
with $T_0 = 0$ and $i \in \mathbb{N}$, where $T_i$ represents the ocurrence of the $i-th$ record and $D(\cdot)$ corresponds to the maximum drawdown random variable defined in \eqref{ec: Drawdown}, which present a new record in the time series and, therefore, $D(T_i)$ and $T_i$ are random variables that define a stochastic process of drawdown records that must be characterised in terms of their time evolution and distribution.


The observation of the occurrence of new records allows modeling a stochastic process that alternates in different states and thus observe the long-term behaviour of the records that are achieved. In the following subsection a model is developed that allows to characterise the evolution of the records as a stochastic process.

\subsection{Piecewise Deterministic Markov Process}

In our model we consider a Piecewise Deterministic Markov Process (PDMP) as the main tool that allows us to characterise the stochastic process. PDMP processes were first introduced by \cite{DAV} and which has been popularized in recent years \cite{azais2018statistical}, becoming a powerful tool for working with stochastic differential equations, specifically with processes that can change states. PDMP processes is defined as follows

\subsubsection{Markov Process}

Let $( X_t, t \geq 0 )$ a Markov process with its continuous realizations on the right-hand side and with a limit on the left-hand side (\textit{c\'{a}dl\'{a}g}) almost surely. Moreover, the process $X$ has values in an open subset $\mathcal{X}\subseteq \mathbb{R}^d$, for $d \geq 1$. 

\begin{remark}
We use the notation $\partial \mathcal{X}$ as a frontier of $\mathcal{X}$ and $\overline{\mathcal{X}}$ the closure of $\mathcal{X}$.
For a Markov process $X$, we define the Markov semigroup as a family of operators $P_t$ acting on bounded measurable functions $f$ such that, for all $t > 0$ 
\begin{equation*}
P_t f(x) = \mathbb{E} \left[ f(X_t) \vert X_0 = x  \right],
\end{equation*}
and the infinitesimal generator $\mathcal{U}$ of $X$, acting on the functions $f$ as
\begin{equation}\label{ec: OperadorInf}
\mathcal{U} f(x) = \lim_{t \rightarrow 0} \frac{P_t f(x) - f(x)}{t}.
\end{equation}

\end{remark}
   
The operator $\mathcal{U}$ characterises the dynamics of the process and can be interpreted as the derivative in time of the semigroup $\partial_t P_t = \mathcal{U}$, at $t=0$. 

The PDMP process is determined by three characteristic components: 
\begin{enumerate}
\item Deterministic part of the movement between jumps $\Upsilon$.
\item $\lambda$ jump rate associated with a probability distribution.
\item Transition measure or $Q$ jump kernel.
\end{enumerate}


In addition, the transition measure $Q$ has the following characteristics:

\begin{enumerate}
\item For each fix $A\in \mathcal{E}$, mapping $x \rightarrow Q(A;x)$ is measurable.
\item $Q(\left\lbrace x  \right\rbrace; x )=0$
\end{enumerate}


  
 








Let us consider $(J_n)_{n\in \N}$ a time-homogeneous irreducible Markov chain taking values in the state space $K=\{1, \ldots, k\}$ with initial law 
$\pi_i=\Pro(J_0=i)$ for all $i\in K$ and  transition probability matrix $Q= (q_{ij})_{i,j 
\in K}$, i.e.
$$\Pro(J_{n+1}=j |J_n =i )= q_{ij}.$$
which implies that all states are connected and each state can be reached by a given state $i$.

We denote by $(T_n)_{n\in \N}$ the sequence of the record ocurrence random times and $(S_n)_{n\in \N}$ the random interval times; i.e. $S_n = T_{n+1} -T_{n}$.

In addition, we define $(\nu_t)_{t\in \R_+}$ by
$$\nu_t= \sum_{n\geq 0} J_n \1_{[T_n,T_{n+1}[} (t).$$

If $(J_n)_{n\in \N}$ is an irredicible Markov Chain, then  $(\nu_t)_{t\in \R}$ is an irreducible Markov process with the same initial law that your immersed Markov chain $(J_n)_{n\in \N}$. Here, $J_n$ is the state take by $\nu_t$ on $[T_n,T_{n+1}[$.

The matrix generator $A= (a_{ij})_{i,j\in K}$ of the process $(\nu_t)_{t\in \R_+}$ is given by 
\begin{equation}
 a_{ij} = \left\{
\begin{array}{cll}
\lambda_i q_{ij} & for & i\neq j\\
-\lambda_i(1-q_{ii})& for & i=j
\end{array}.
\right.
\end{equation}

The generator $A$ is stable and conservative; {\it i.e.} $\sum_{j\in K} a_{ij}=0$ and $a_{ij}\geq 0$ for $i\neq j$. which implies that the solutions of the system are driven by this differential operator which does not change over time and maintains stable solutions.

With these preliminaries we can define the record stochastic process $(R_t)_{t\in \R_+}$, which take values on $[0, 1]$ and suppose that $\Pro(R_0=r)=1$. 

\subsection{Definition of record process}

We consider a PDMP process that is modeled by mean of the Markov chain $(J_n)_{n\in \N}$. This process makes it possible to observe the expected value of maximum drawdown over time, allowing the development of applications that can be used in several areas of knowledge that require risk management in their analysis. The process  considering the following assumptions:

\begin{assumption}\label{assump.1}
\hspace{-0.1cm}
\begin{enumerate}
\item[i)] Between  two consecutive jump times $T_n$ and $T_{n+1}$, the continuous time process $(R_t)$ is constant and equal to $r_n = R_{T_n}$.
\item[ii)] The record jumps size $\Delta_{n}$ at the time $T_n$, is given by $\Delta_n= \rho(1-r_n)$, where $\rho$ is a random variable that takes values on $[0,1]$ with probability distribution $G_i$, for $J_n=i\in K$. The probability distributions $(G_i)_{i\in K}$ could be different.
\item[iii)] The interval time $S_n$ is a random variable with exponential distribution of parameter $\lambda_{i}$, for  $i \in K$, where the jump rate $\lambda_i$ of state $i$ is a strictly positive constant.
\end{enumerate}
\end{assumption}


Note that he records process increases by the value $\Delta_n$  at the time $T_n$.  After that, the record process remains constant until the next jump $T_{n+1}$. 
Thus, the sample path of the stochastic process $(R_t,\nu_t)_{t \in \R_+}$ with values in $[0,1] \times K$ starting from a fixed point $(r,\nu)$ is defined in the following way, as we illustrate in Figure \ref{fig1}.

\begin{figure}[ht]
\centering
{\includegraphics[width=0.9\textwidth]{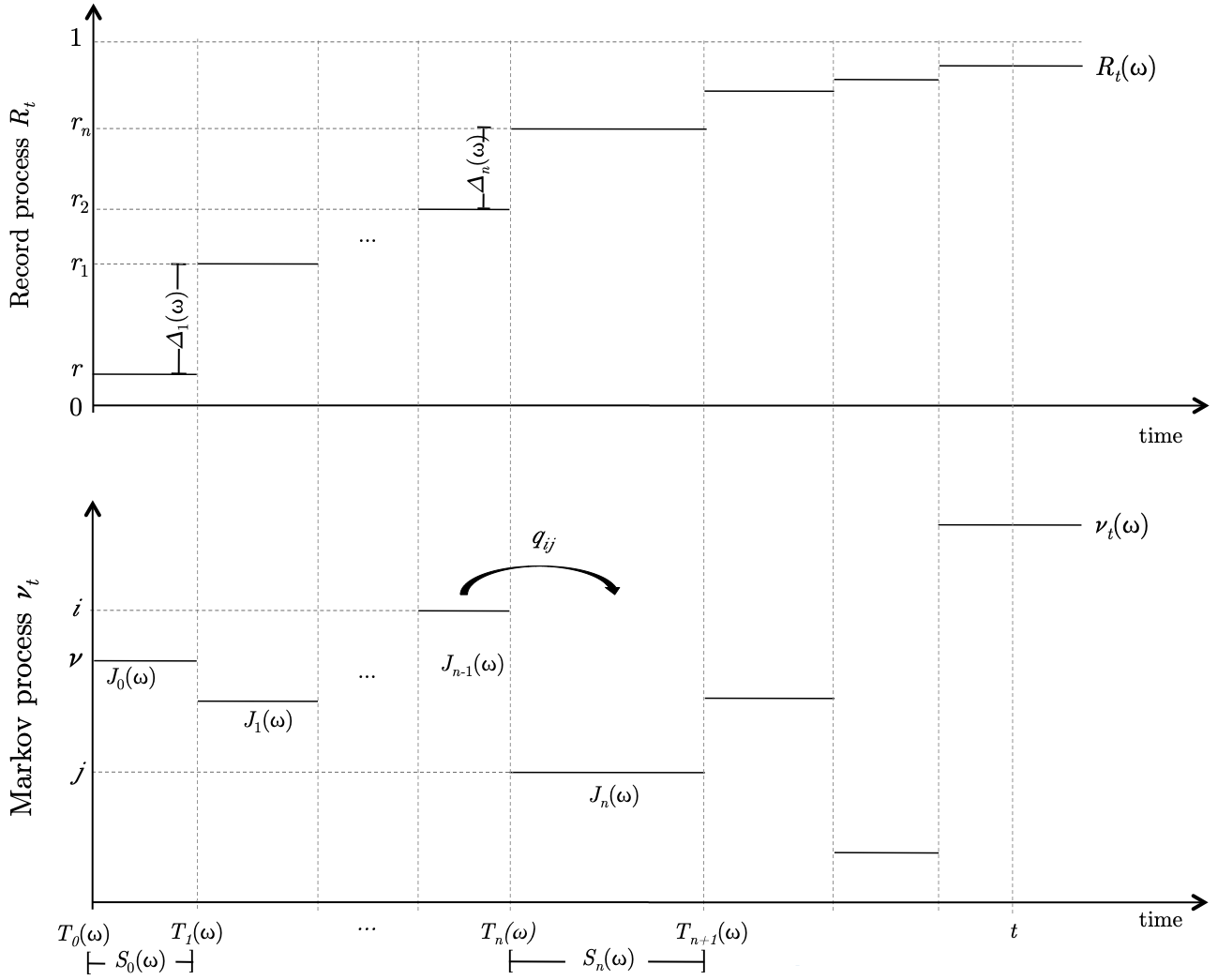}}
\caption{Sample path of the PDMP process $(R_t, \nu_t)$.} \label{fig1}
\end{figure}

First, say that $\nu_t=J_0=\nu$ for $t< T_1= S_0$, where $S_0$ stands for the first jump time of $\nu_t$, which has an exponential distribution of parameter $\lambda_{J_0}$, and $\nu_{T_1}=J_1$. Now, we define $\rho_{J_1}$ as a random variable with density distribution $g_{J_1}$, and then we take $\Delta_1=\rho_{J_1}(1-r)$. Then the sample path $(R_t)$ up to the first jump time $T_1$ is now defined as follows:
\begin{equation*}
\begin{array}{llll}
R_t        &=&  r                      & \text{ if } 0\leq t< T_1,\\
R_{T_1} &=& r +  \Delta_1& .
\end{array}
\end{equation*}

The process now restarts from $r_1= R_{T_1}$ according to the same recipe. Thus, we define $S_1$ a random variable with exponential distribution of parameter $\lambda_{J_1}$, so we take $T_2= T_1 + S_1$. The Markov process $(\nu_t)$ jump to regime $J_2$ with rate $q_{J_1J_2}$, thus  $\nu_{T_2}=J_2$ and $\Delta_2=\rho_{J_2}(1-r_1)$ where  $\rho_{J_2}$ is a random variable with probability distribution $G_{J_2}$. Then, the sample path $(R_t)$ up to the second jump time, starting from $r_1$ at time $T_1$, is defined as
\begin{equation*}
\begin{array}{llll}
R_t        &=& r_1                       &\text{ if } T_1\leq t < T_2,\\
R_{T_2} &=& r_1 +  \Delta_2&;
\end{array}
\end{equation*}
and so on. Finally, for all $n\in\N$ and for $k=1,\ldots, n$, we take $(J_k)_{k=0:n}$ a sample path of the Markov chain, $S_k$ a random variable with exponential distribution of parameter $\lambda_{J_k}$, $T_{k+1}=T_k+S_k$, $\nu_{T_{k+1}}=J_{k+1}$, and  we define $\Delta_{k+1}= \rho_{J_{k+1}}(1-r_{k})$ where $\rho_{J_{k+1}}$ is a random variable with probability distribution $G_{J_{k+1}}$ and $r_k=R_{T_{k}}$. Then, we have
\begin{equation*}
\begin{array}{llll}
R_t  &=&  r + \sum_{k=1}^n \Delta_k & \text{ if } T_n\leq t < T_{n+1},
\end{array}
\end{equation*}
which can be rewritten as
\begin{equation}\label{eq.1.2}
R_t =  r  + \sum_{k\geq 1} \Delta_k  \1_{(t\geq T_k)}.
\end{equation}

Note that the record jump size $\Delta_k$ is a funtion of $\nu_{T_k}=J_k$ and $R_{T_{k-1}}=r_{k-1}$.

\begin{lemma}\label{lemma.1}
The records process $(R_t)_{t \in \R_+}$ defined in \eqref{eq.1.2} with initial value $R_0\in[0,1[$ satisfy
\begin{equation}
R_t= R_0+(1-R_0)R^0_t, \label{ProcessLemma1}
\end{equation}
where the process $(R^0_t)_{t \in \R_+}$ given by
\begin{equation}
R^0_t= \sum_{k\geq 1} \rho_{J_{k}}\left(\prod_{i=1}^{k-1}(1-\rho_{J_i}) \right)  \1_{(t\geq T_k)}, \label{ProcessLemma2}
\end{equation}
is the records process obtained when the initial condition is zero, $R_0^0=0$.
\end{lemma}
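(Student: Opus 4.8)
The plan is to reduce the infinite-sum representation \eqref{eq.1.2} to the factored form \eqref{ProcessLemma1} by first understanding how the record values at the jump times evolve, and then substituting back into the sum. Since the process is piecewise constant and jumps only at the times $T_k$, everything is governed by the values $r_k = R_{T_k}$, so the heart of the argument is a recursion for $(r_k)_{k\geq 0}$ with $r_0 = R_0 = r$.

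First I would write down the one-step recursion. By Assumption \ref{assump.1}(ii) the jump size is $\Delta_k = \rho_{J_k}(1-r_{k-1})$, and since $R_{T_k} = R_{T_{k-1}} + \Delta_k$ we obtain
\begin{equation*}
r_k = r_{k-1} + \rho_{J_k}(1-r_{k-1}).
\end{equation*}
The key observation — and what makes the whole computation collapse — is that this is linear in the complementary quantity $1-r_k$. Indeed, subtracting both sides from $1$ gives the multiplicative recursion
\begin{equation*}
1-r_k = (1-r_{k-1})(1-\rho_{J_k}).
\end{equation*}
A straightforward induction on $k$ then yields the closed form $1-r_k = (1-r)\prod_{i=1}^{k}(1-\rho_{J_i})$, with the empty product equal to $1$ covering the base case $k=0$.

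With this in hand I would re-express each jump size in terms of the initial value alone: using $1-r_{k-1} = (1-r)\prod_{i=1}^{k-1}(1-\rho_{J_i})$, the jump size becomes
\begin{equation*}
\Delta_k = \rho_{J_k}\,(1-r)\prod_{i=1}^{k-1}(1-\rho_{J_i}).
\end{equation*}
Substituting this into \eqref{eq.1.2} and factoring the common $(1-r)$ out of the sum gives
\begin{equation*}
R_t = r + (1-r)\sum_{k\geq 1}\rho_{J_k}\Bigl(\prod_{i=1}^{k-1}(1-\rho_{J_i})\Bigr)\1_{(t\geq T_k)},
\end{equation*}
which is exactly \eqref{ProcessLemma1} once we recognize the remaining sum as $R^0_t$ from \eqref{ProcessLemma2} and recall $r=R_0$.

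Finally I would confirm that $R^0_t$ is indeed the process started from zero: setting $R_0 = r = 0$ in the construction makes $1-r=1$, so the formula above reduces to $R_t = R^0_t$, and since $R^0_t$ has no constant term while every indicator $\1_{(t\geq T_k)}$ vanishes at $t=0$ (as $T_k>0$ for $k\geq 1$), we get $R^0_0 = 0$. I do not anticipate a genuine obstacle here; the only point requiring care is the bookkeeping of the telescoping product and the empty-product convention in the induction, so I would state the base case explicitly and keep track of the product indices to avoid an off-by-one error between $\prod_{i=1}^{k-1}$ and $\prod_{i=1}^{k}$.
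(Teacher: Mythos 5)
Your proof is correct and follows essentially the same route as the paper: both derive the closed form $\Delta_k = \rho_{J_k}\bigl(\prod_{i=1}^{k-1}(1-\rho_{J_i})\bigr)(1-R_0)$ and substitute it into \eqref{eq.1.2}, factoring out $(1-R_0)$. The only cosmetic difference is that you organize the induction through the multiplicative recursion $1-r_k=(1-r_{k-1})(1-\rho_{J_k})$, whereas the paper unfolds the recursion for the $\Delta_k$ directly; the underlying computation is identical.
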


\begin{proof}
$R_t$ is given by equation \eqref{eq.1.2} where $\Delta_k=\rho_{J_{k}}(1-r_{k})$ are given by the following recursive equations:
\begin{eqnarray*}
\Delta_1 & = & \rho_{J_1}(1-R_0)\\
\Delta_2 & = & \rho_{J_2}(1-R_{T_1})= \rho_{J_2}(1-R_0-\Delta_1)=\rho_{J_2}(1-\rho_{J_1})(1-R_0)\\
\Delta_3 & = & \rho_{J_3}(1-R_{T_2})=  \rho_{J_3}(1-R_0 -\Delta_1 -\Delta_2)=\rho_{J_3}(1-\rho_{J_2})(1-\rho_{J_1})(1-R_0)\\
\vdots & & \\
\Delta_k& = & \rho_{J_k}(1-R_{T_{k-1}})=  \rho_{J_k}(1-R_0 -\Delta_1 -\ldots -\Delta_{k-1})=\rho_{J_k}\prod_{i=1}^{k-1}(1-\rho_{J_i})(1-R_0).\\
\end{eqnarray*}

Then, 
\begin{equation*}
R_t =  R_0  + (1-R_0) \sum_{k\geq 1} \rho_{J_{k}}\left(\prod_{i=1}^{k-1}(1-\rho_{J_i}) \right)  \1_{(t\geq T_k)}.
\end{equation*}
When $R_0=0$ we obtain the records process 
\begin{equation*}
R^0_t = \sum_{k\geq 1} \rho_{J_{k}}\left(\prod_{i=1}^{k-1}(1-\rho_{J_i}) \right)  \1_{(t\geq T_k)}.
\end{equation*}
\end{proof}

Considering that $R_0$ and $\nu_0$ are independents, we have that the process $(R_t,\nu_t)_{t\in \R_+}$ is a piecewise deterministic Markov process (PDMP).

We denote by $\M$ the set of measurable real valued functions on $E=[0,1]\times K$ and by $\M_0$ the set of bounded measurable real valued function on $E$.

From \cite{DAV}, we have that the domain $\D(\U)$ of the infinitesimal generator $\U$ of $(\nu_t,R_t)_{t\in \R_+}$ consists of those functions $f \in \mathcal{M}$ integrable with respect to the probability measures $G$. For $f \in \D(\U)$ the infinitesimal generator $\U$ is given by
\begin{equation}\label{eq.1.3}
\U f(r,\nu)= \lambda_{\nu} \sum_{j\in K} q_{\nu j} \int_{[0,1]} \left( f(r+ \rho(1-r), j) - f(r,\nu)\right)G_j(d\rho),
\end{equation}
with $(r,\nu) \in E$.

We denote by $P: (\nu,j,r,B,t) \rightarrow P_{\nu j}(r,B,t)$ the transition probability of $(R_t, \nu_t)$; i.e.
$$P_{\nu j}(r,B,t) = \Pro(R_t\in B, \nu_t=j | R_0=r, \nu_0=\nu).$$

It is defined for all $r\in[0,1]$, $\nu,j\in K$, $t\in \R_+$ and $B\in \mathcal{B}([0,1])$, with $\mathcal{B}([0,1])$ being the Borel $\sigma$-field of $[0,1]$.

The transition probability permits to give an expression for the probability distribution of $(R_t,\nu_t)$ in the following way:
\begin{equation}\label{eq.1.3.1}
\Pro(R_t\in B, \nu_t=j|R_0=r) = \sum_{\nu\in K} \pi_{\nu} P_{\nu j}(r,B,t).
\end{equation}

For $t \in R_+$ fixed, define an operator $\Po_t: \M_0 \rightarrow \M_0$ by the following conditional expectation given the starting point $(r,\nu)$
$$\Po_t f (r,\nu)= \E_{(r,\nu)}[f(R_t,\nu_t)]= \sum_{j\in K} \int_{[0,1]} f(r',j) P_{\nu j}(r,dr',t).$$
Here, $(\Po_t; t \in \R_+)$ is the semigroup associated with the infinitesimal generator $\U$; (i.e $\Po_{s+t}= \Po_s \Po_t$).

From \cite{DAV}, we recall that for $t$ fixed, $(r,\nu)\in E$  and $f \in \D(\U)$, $z(t,r,\nu)=\Po_t f (r,\nu)$  is the unique solution of the following partial differential equation (EDP):

\begin{equation}\label{eq.1.4}
\left\{
\begin{array}{rll}
\displaystyle{\frac{\partial z}{\partial t}} (t,r,\nu) &=& \U z(t,r,\nu),\\
z(0,r,\nu) &=& f(r,\nu).
\end{array}
\right.
\end{equation}

\section{Statistical properties of record process}\label{Variabilidad de R}

\subsection{Mean}

First, to analyze the variability of the process we proceed to calculate the mean of the stochastic process. The mean of $R_t$ given the starting point $R_0=r$ is given by

$$
m(t,r)= \sum_{\nu\in K} \pi_{\nu} m(t,r,\nu),
$$
where $\pi=(\pi_{\nu})_{\nu\in K}$ is the initial law of the Markov chain $(\nu_t)_{t\in\R_{+}}$ and $m(t,r,\nu)=  \E_{(r,\nu)}[R_t]$
is the conditional expectations of $R_t$, given the starting point $(r,\nu)$.
\vspace{0.1cm}

\begin{proposition}\label{prop.1.2}
The conditional expectations $m(t,r)= \E_{r}[R_t]$ of $R_t$, with the starting point $R_0=r$ is given by 
\begin{equation}\label{eq.1.10}
m(t,r) = \pi e^{B t} r + \pi \left( e^{B t} - I \right)B^{-1} \Lambda Q \mu, 
\end{equation}
where  $m(t,r)=(m(t,r,j))_{j\in K}$, $B= \Lambda Q (I-M)- \Lambda$, $\Lambda=diag(\lambda_{j}, j\in K)$, $Q$ the transition probability matrix of the Markov chain $(\nu_t)_{t\in R_{+}}$ and $\pi$ its initial distribution, $\mu=(\mu_j)_{j \in K}$ with $\mu_j=\int_{[0,1]} \rho G_j(d\rho)$, $M=diag(\mu_{j}, j\in K)$, and $I$ is the $k\times k$ identity matrix.
\end{proposition}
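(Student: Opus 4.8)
The plan is to recognize $m(t,r,\nu)=\E_{(r,\nu)}[R_t]=\Po_t f(r,\nu)$ for the coordinate function $f(r,\nu)=r$, and to use the evolution equation \eqref{eq.1.4}, namely $\partial_t z=\U z$ with $z(0,\cdot,\cdot)=f$. The crucial structural fact is that the generator \eqref{eq.1.3} preserves affinity in $r$: writing $r+\rho(1-r)=r(1-\rho)+\rho$, the inner integral against $G_j$ involves $\rho$ only through its mean $\mu_j=\int_{[0,1]}\rho\,G_j(d\rho)$, so $\U$ sends any function $a_\nu r+b_\nu$ to another function affine in $r$. Combined with Lemma \ref{lemma.1}, which gives $R_t=R_0+(1-R_0)R^0_t$ and hence $m(t,r,\nu)=r+(1-r)\,m^0(t,\nu)$, this justifies the ansatz
\[
m(t,r,\nu)=\alpha_\nu(t)\,r+\beta_\nu(t),
\]
where $\alpha(t)=(\alpha_\nu(t))_{\nu\in K}$ and $\beta(t)=(\beta_\nu(t))_{\nu\in K}$ are vectors to be determined, and where Lemma \ref{lemma.1} moreover forces $\alpha_\nu(t)+\beta_\nu(t)=1$ for every $\nu$.

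Next I would substitute this ansatz into \eqref{eq.1.4} and separate the coefficient of $r$ from the constant term. Using $\sum_{j\in K}q_{\nu j}=1$, the coefficient of $r$ yields the linear system $\dot\alpha=\bigl[\Lambda Q(I-M)-\Lambda\bigr]\alpha=B\alpha$, and since $z(0,r,\nu)=r$ the initial condition is $\alpha(0)=\mathbf 1$ (the all-ones vector); hence $\alpha(t)=e^{Bt}\mathbf 1$.

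For the constant term, a direct match produces a system whose homogeneous part is $\Lambda(Q-I)$, which does not visibly reproduce the target form. The clean route is to eliminate $\alpha$ using the relation $\alpha=\mathbf 1-\beta$ supplied by Lemma \ref{lemma.1}: together with $M\mathbf 1=\mu$ this collapses the constant-term equation to the inhomogeneous linear ODE $\dot\beta=B\beta+\Lambda Q\mu$ with $\beta(0)=0$. Solving by variation of parameters (all matrices involved commute with $B$) gives $\beta(t)=(e^{Bt}-I)B^{-1}\Lambda Q\mu$. Averaging $m(t,r,\nu)=\alpha_\nu(t)r+\beta_\nu(t)$ against the initial law $\pi$ then yields \eqref{eq.1.10}.

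I expect the main obstacle to be precisely the constant-term equation: taken at face value its generator is $\Lambda(Q-I)$ rather than $B$, so naive integration will not collapse to $(e^{Bt}-I)B^{-1}\Lambda Q\mu$. The self-similarity $\alpha+\beta=\mathbf 1$ from Lemma \ref{lemma.1} is exactly the ingredient that converts this into a clean $B$-driven ODE; a secondary point to record is that $B$ must be invertible for $B^{-1}$ to be meaningful, and one can verify $B\mathbf 1=-\Lambda Q\mu$ (using $Q\mathbf 1=\mathbf 1$ and $M\mathbf 1=\mu$) as a consistency test of the final expression.
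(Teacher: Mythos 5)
Correct, and essentially the same approach as the paper: both proofs use Lemma \ref{lemma.1} to close the PDE \eqref{eq.1.4} (with $f(r,\nu)=r$) into the finite-dimensional linear system $\partial_t \bar{m}(t,r) = B\,\bar{m}(t,r) + \Lambda Q \mu$ and solve it by variation of parameters. The only difference is bookkeeping: the paper substitutes the Lemma~\ref{lemma.1} identity $m(t,r+\rho(1-r),j)=\rho+(1-\rho)\,m(t,r,j)$ directly into the generator, which handles your $r$-coefficient and constant term in one stroke and so never meets the $\Lambda(Q-I)$ obstacle that you resolve with the constraint $\alpha+\beta=\mathbf{1}$ (itself the same identity in disguise).
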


\begin{proof}
We have that $m(t,r,\nu)$ is the unique solution of the system \eqref{eq.1.4} for $f(r, \nu)= r$, then we can rewrite \eqref{eq.1.4} as
\begin{equation}\label{eq.1.11}
\left\{
\begin{array}{l}
\displaystyle{\frac{\partial m}{\partial t} (t,r,\nu) = \lambda_\nu \sum_{j\in K}  q_{\nu j} \int_{[0,1]} \left(m(t,r+\rho(1-r),j)-m(t,r,\nu) \right)G_j(d\rho),}\\
m(0,r,\nu) = r,
\end{array}
\right.
\end{equation}

On the other hand, from Lemma \ref{lemma.1}  
\begin{eqnarray}\label{eq.1.12}
m(t,r+\rho(1-r),j) & = & \E[R_t | R_0=r+\rho(1-r), \nu_0=j] \nonumber\\
& = &\E[\rho + (1-\rho)(r+(1-r)R^0_t) | R_0=r+\rho(1-r), \nu_0=j] \nonumber \\
& = &\E[r+\rho - \rho r+R_t^0 - r R_t^0 - \rho R_t^0 +  \rho r R_t^0 | R_0=r+\rho(1-r), \nu_0=j] \nonumber \\
& = &\E[r+(1-r)R_t^0-\rho (r+(1+r)R_t^0-1) | R_0=r+\rho(1-r), \nu_0=j] \nonumber \\
& = &\rho + (1-\rho)\E[R_t | R_0=r, \nu_0=j] \nonumber\\
& = &\rho + (1-\rho)m(t,r,j).
\end{eqnarray}
Then, from \eqref{eq.1.11} and \eqref{eq.1.12} we have
\begin{eqnarray}\label{eq.1.13}
\frac{\partial m}{\partial t} (t,r,\nu) &=& \lambda_\nu \sum_{j\in K}  q_{\nu j} \left( \mu_j + (1-\mu_j)m(t,r,j) - m(t,r,\nu)\right). \nonumber 
\end{eqnarray}

This differential equations system can be written as a system of equations in matricial terms
\begin{eqnarray}\label{eq.1.14}
\frac{\partial \bar{m}}{\partial t} (t,r) &=& \Lambda Q\mu + B \bar{m}(t,r).  
\end{eqnarray}

The solution of  this differential equations system is given by
\begin{equation}\label{eq.1.15}
\bar{m}(t,r) = e^{B t} r + \left( e^{B t} - I \right)B^{-1} \Lambda Q \mu.
\end{equation}

Thus, we obtain the result given in equation \eqref{eq.1.10} for $m(t,r)=\pi \bar{m}(t,r)$ .

\end{proof}

\subsubsection{Case of one-state}


Using equation \eqref{eq.1.12}, we have

\begin{equation}
\left\{
\begin{array}{l}
\displaystyle{\frac{\partial m}{\partial t} (t,r,\nu_0) =  \mu_{\nu_0} (1-m(t,r,\nu_0)),}\\
m(0,r,\nu_0) = r,
\end{array}
\right.
\end{equation}
with $\mu_{\nu_0}=\int_{[0,1]} \rho G_{\nu_0}(d\rho)$, and $\nu_0$ corresponds to the state $\nu_0$ which remains fixed for all time $t$.

Regrouping terms, we have
\begin{eqnarray}
\frac{\partial m}{\partial t} (t,r,\nu_0) + \mu_{\nu_0}m(t,r,\nu_0) = \mu_{\nu_0},  \nonumber 
\end{eqnarray}
solving the differential equation we have
\begin{eqnarray}\label{ec: primer momento 1-s}
m(t,r,\nu_0) = 1- \dfrac{1-r}{e^{\mu_{\nu_0} t}},   
\end{eqnarray}
providing the functional form of the process mean $R_t$ for the case of a one-state. We can analyse the function that characterises the process, and obtain the following results
\begin{equation*}
\lim_{t \rightarrow 0} m(t,r,\nu_0) = r,
\end{equation*}
\begin{equation}\label{eq: lim_mean}
\lim_{t \rightarrow \infty} m(t,r,\nu_0) = 1,
\end{equation}
describing the dynamics of the $R_t$ process over time.

\subsubsection{Case of two-state}


From \eqref{eq.1.14} for the two-state case, we have
\begin{eqnarray}\label{2-states}
\frac{\partial m}{\partial t} (t,r,j) &=&  a \  m(t,r,j) + b \ m(t,r,i) + e \nonumber\\
\frac{\partial m}{\partial t} (t,r,i) &=& c \ m(t,r,j) + d \ m(t,r,i) + f \\
m(0,r,\nu_0) &=& r, \nonumber
\end{eqnarray}
where the coefficients are defined as follows
\begin{eqnarray*}
a&=& \lambda_j q_{1,1} (1-\mu_j) - \lambda_j\nonumber\\
b&=& \lambda_j q_{1,2} (1-\mu_i)\nonumber\\
c&=& \lambda_i q_{2,1} (1-\mu_j)\nonumber\\
d&=& \lambda_i q_{2,2} (1-\mu_i) - \lambda_i\nonumber\\
e&=& \lambda_j (q_{1,1} \mu_j + q_{1,2} \mu_i)\nonumber\\
f&=& \lambda_i (q_{2,1} \mu_j + q_{2,2} \mu_i).\nonumber\\
\end{eqnarray*}
Here $\lambda_j, \lambda_i$ represent the jump rate from state $j$ and $i$, respectively; $\mu_j, \mu_i$ represent the expected value of the jump in state $j$ and $i$, respectively and  $q_{1,1}, q_{1,2}, q_{2,1}, q_{2,2}$  the components of the transition matrix $Q$ for two states. 

For solving numerically the system of differential equations \eqref{2-states}, the  Runge-Kutta method \cite{soetaert2010solving, zheng2017modeling} can be used. An example is given in Table  \ref{table: parameters}, where $X$ represents the random variable for the time between jump events subject to states $i$ and $j$ and $\rho$ corresponds to the random variable for the size of the jump. 

\begin{table}[ht] 
\centering
\begin{tabular}{lcl}
\cline{2-3}
& \multicolumn{2}{c}{Distribution}                                   \\
& $\nu_t= i$                        & \multicolumn{1}{c}{$\nu_t= j$} \\ \hline
\multicolumn{1}{r}{$X \sim$} & $Exponential(2)$                 & $Exponential(1)$               \\
\multicolumn{1}{r}{$\E[X] =$}     & \multicolumn{1}{l}{$0.5$} & $1.0$       \\         
\multicolumn{1}{r}{$\rho \sim$}     & \multicolumn{1}{l}{$Beta(2, 20)$} & $Beta(2, 30)$                  \\
\multicolumn{1}{r}{$\E[\rho]=$}     & \multicolumn{1}{l}{$0.012$} & $0.015$                  \\ \hline
\end{tabular}
\caption{Characteristics of the stochastic elements of the process $R_t$} \label{table: parameters}
\end{table}

\begin{equation}
Q= \left[
\begin{matrix}
q_{1,1} &  q_{1,2}\\
q_{2,1} &  q_{2,2}
\end{matrix}\right] = \left[
\begin{matrix}
0.6 &  0.4\\
0.5 &  0.5
\end{matrix}
\right],
\end{equation} \label{eq: transition_matrix}

In this case, the representation of the evolution of the mean for each state ($\nu$) by the Runge-Kutta method is described as

\begin{equation}
\left\{
\begin{array}{l}
\displaystyle{m(t+\Delta,r,\nu) =  m(t,r,\nu) + \frac{1}{6}(K_1+2K_2+2K_3+K_4),}\\
K_1 = \Delta f(t,m(t,r,\nu)), \\
K_2 = \Delta f(t+\dfrac{1}{2}\Delta,m(t,r,\nu)+\frac{1}{2}K_1), \\
K_3 = \Delta f(t+\dfrac{1}{2}\Delta,m(t,r,\nu)+\frac{1}{2}K_2), \\
K_4 = \Delta f(t+\Delta,m(t,r,\nu)+K_3), \\
\end{array}
\right.
\end{equation}

The following figure \ref{fig: mean path}, shows the evolution of the expected value of the process $R_t$, as a result of the numerically solution of the system of equations defined in \eqref{2-states} for the two-state case with initial value $r=0$. It is observed that the mean values increase rapidly at the beginning and then decrease with time, approaching the limit $1$. The figure is like the learning curve, which describes a situation in which the task may be easy to learn and the learning progression is initially rapid and fast.

The curve levels off after a certain instant $t$. In the graph it can be noticed that this instant $t \sim 30$ could be described as a plateau. 

\begin{figure}[h]
	\centering\includegraphics[width=10cm]{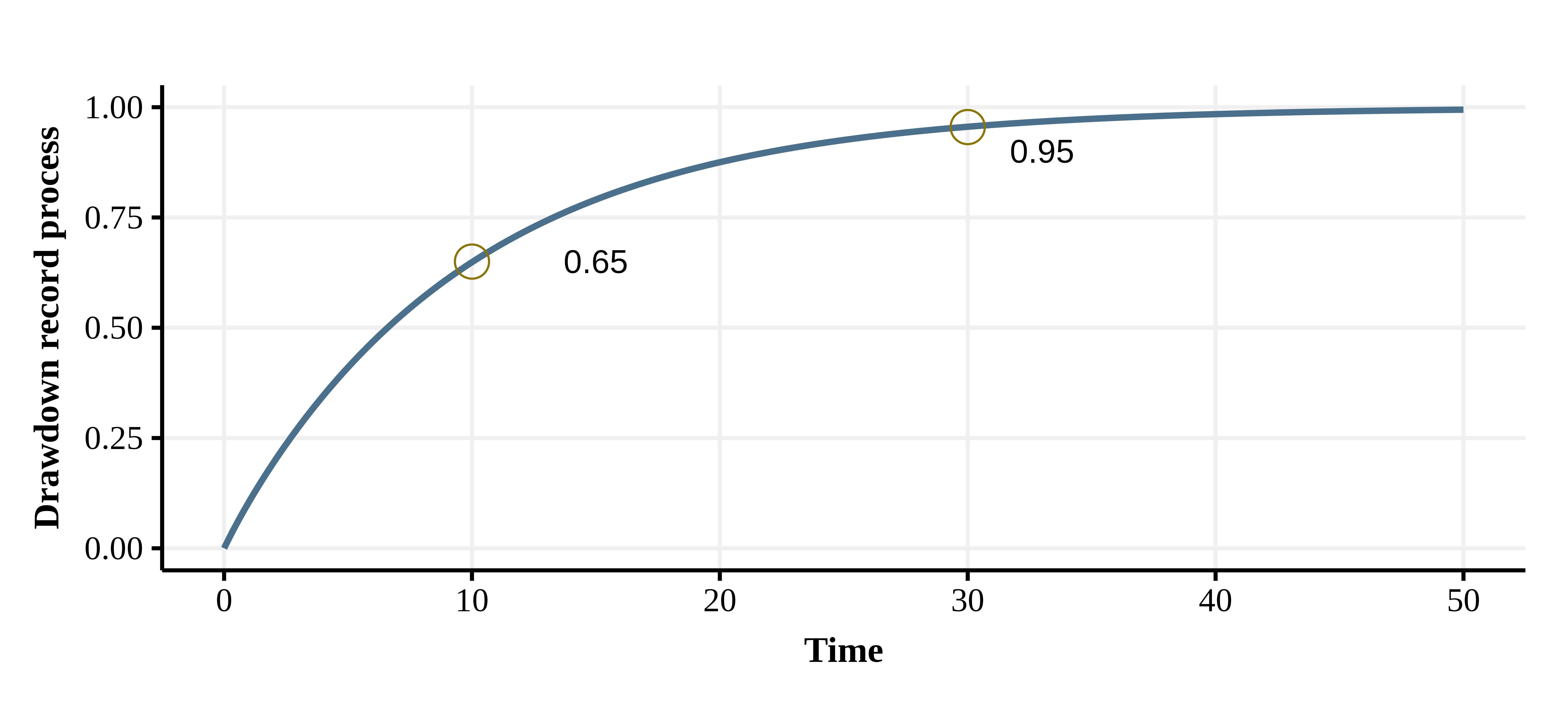}
	\caption{Evolution of the expected value of $R_t$ process.}\label{fig: mean path}
\end{figure}

\subsection{Variance estimation}

In this section, the variability of the record process $R_t$ is studied, starting from the computation of the second moment given the  initial point $R_0^2 = r^2$

$$
m_2(t,r)= \sum_{\nu\in K} \pi_{\nu} m_2(t,r,\nu),
$$
where $\pi=(\pi_{\nu})_{\nu\in K}$ is the initial law of the Markov chain $(\nu_t)_{t\in\R_{+}}$ and $m_2(t,r,\nu)=  E_{r,\nu}[R_t^2]$ is the conditional expectations of $R_t$, given the initial point $(r^2,\nu)$.

\begin{proposition}\label{prop.1.3}
	The variance of records process $(R_t)_{t \in \R_+}$ defined in \eqref{eq.1.2} with initial value $R_0^2 = r^2$  is, 
\begin{eqnarray}\label{ec: Varianza}
Var(R_t) &=& \E_{r,\nu} \left[ R_t^2 \right]-\E_{r,\nu}^2 \left[R_t\right] \nonumber \\
&=&  \pi e^{Ht} r^2 + \pi \int_0^t e^{-H(t-s)} \Lambda Q \mu_2 ds + \pi \int_0^t e^{H(t-s)} K \overline{m}(s,r)ds \nonumber \\
& - &  \left( \pi e^{B t} r + \pi \left( e^{B t} - I \right)B^{-1} \Lambda Q \mu \right)^2
\end{eqnarray}
where  $B= \Lambda Q (I-M)- \Lambda$, $\Lambda=diag(\lambda_{j}, j\in K)$, $Q$ the transition probability matrix of the Markov chain $(\nu_t)_{t\in R_{+}}$, $H = [\lambda Q (I-2M + M_2 - \Lambda)]$, $K= 2 \Lambda Q (M-M_2)$,  $\mu=(\mu_j)_{j \in K}$ with $\mu_j=\int_{[0,1]} \rho G_j(d\rho)$, $\mu_2=(\mu_{2,j})_{j \in K}$ with $\mu_{2,j}=\int_{[0,1]} \rho^2 G_j(d\rho)$.

\end{proposition}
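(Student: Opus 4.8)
The plan is to mirror the strategy of Proposition \ref{prop.1.2}, now applied to the second moment $m_2(t,r,\nu)=\E_{(r,\nu)}[R_t^2]$. First I would invoke the characterisation \eqref{eq.1.4}: $m_2(t,r,\nu)$ is the unique solution of the Cauchy problem associated with the test function $f(r,\nu)=r^2$, so that by the expression \eqref{eq.1.3} for $\U$ it satisfies
\begin{equation*}
\frac{\partial m_2}{\partial t}(t,r,\nu)=\lambda_\nu\sum_{j\in K}q_{\nu j}\int_{[0,1]}\bigl(m_2(t,r+\rho(1-r),j)-m_2(t,r,\nu)\bigr)G_j(d\rho),
\end{equation*}
with initial condition $m_2(0,r,\nu)=r^2$.

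The decisive step, and the one I expect to be the main obstacle, is to rewrite the shifted second moment $m_2(t,r+\rho(1-r),j)$ in terms of quantities evaluated at the base point $(r,j)$, exactly as \eqref{eq.1.12} does for the first moment. Starting the process at $R_0=r+\rho(1-r)$ gives $1-R_0=(1-r)(1-\rho)$, so Lemma \ref{lemma.1} yields $R_t=\bigl(r+\rho(1-r)\bigr)+(1-r)(1-\rho)R^0_t$, while starting at $R_0=r$ gives $R_t=r+(1-r)R^0_t$ with the same zero-initialised process $R^0_t$. Squaring the first identity, expressing $\E[R^0_t\mid\nu_0=j]$ and $\E[(R^0_t)^2\mid\nu_0=j]$ through $m(t,r,j)$ and $m_2(t,r,j)$ by inverting the second identity, and simplifying, I expect the cross terms in $r$ to cancel and leave the clean quadratic in $(1-\rho)$,
\begin{equation*}
m_2(t,r+\rho(1-r),j)=1+2(1-\rho)\bigl(m(t,r,j)-1\bigr)+(1-\rho)^2\bigl(1-2m(t,r,j)+m_2(t,r,j)\bigr);
\end{equation*}
the checks $\rho=0$ (returning $m_2(t,r,j)$) and $\rho=1$ (returning $1$, the record being already maximal) confirm the algebra.

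Next I would integrate this identity against $G_j$, using $\int(1-\rho)\,G_j(d\rho)=1-\mu_j$ and $\int(1-\rho)^2\,G_j(d\rho)=1-2\mu_j+\mu_{2,j}$. Collecting the constant, the $m$-linear and the $m_2$-linear contributions and recalling $\sum_j q_{\nu j}=1$ turns the integro-differential equation into a coupled linear system, which in matrix form reads
\begin{equation*}
\frac{\partial\overline{m}_2}{\partial t}(t,r)=\Lambda Q\mu_2+K\,\overline{m}(t,r)+H\,\overline{m}_2(t,r),
\end{equation*}
where $K=2\Lambda Q(M-M_2)$ multiplies the already-known mean vector $\overline{m}(t,r)$ of \eqref{eq.1.15}, and $H=\Lambda Q(I-2M+M_2)-\Lambda$ governs the homogeneous part, with $M_2=\mathrm{diag}(\mu_{2,j},\,j\in K)$.

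Finally, since $\overline{m}(t,r)$ is known from Proposition \ref{prop.1.2}, this is an inhomogeneous linear ODE in $\overline{m}_2$ with constant coefficient matrix $H$, which I would solve by variation of parameters subject to $\overline{m}_2(0,r)=r^2$, obtaining
\begin{equation*}
\overline{m}_2(t,r)=e^{Ht}r^2+\int_0^t e^{H(t-s)}\Lambda Q\mu_2\,ds+\int_0^t e^{H(t-s)}K\,\overline{m}(s,r)\,ds.
\end{equation*}
Projecting with the initial law via $m_2(t,r)=\pi\overline{m}_2(t,r)$ and subtracting the square of the mean $m(t,r)$ from \eqref{eq.1.10} then yields $Var(R_t)=m_2(t,r)-m(t,r)^2$, which is the asserted identity \eqref{ec: Varianza}.
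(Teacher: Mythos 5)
Your proposal follows essentially the same route as the paper's own proof: the Cauchy problem \eqref{eq.1.4} with $f(r,\nu)=r^2$, the reduction of the shifted second moment via Lemma \ref{lemma.1} (your expression is algebraically identical to the paper's $\rho^2 + 2\rho(1-\rho)\,m(t,r,j) + (1-\rho)^2 m_2(t,r,j)$), the same matrix ODE with $K=2\Lambda Q(M-M_2)$, and variation of parameters followed by subtracting the squared mean. Incidentally, your version is the internally consistent one: writing $H=\Lambda Q(I-2M+M_2)-\Lambda$ and using $e^{H(t-s)}$ in \emph{both} integrals corrects what appear to be typos in the paper's statement and displayed solution (where $-\Lambda$ sits inside the parentheses and the second integral carries $e^{-H(t-s)}$).
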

\vspace{0.1cm}

\begin{proof}
Let $f(r,\nu)=r^2$, $m_2(t,r,\nu) =\E_{r,\nu}[R_t^2]$.  The unique solution of the system \eqref{eq.1.4} that allow to define the behaviour of the second moment in the stochastic process $R_t$, considering a finite number of  of states is given by
\begin{equation}\label{ec: SistemaM2}
\left\{
\begin{array}{l}
\displaystyle{\frac{\partial m_2}{\partial t} (t,r,\nu) = \lambda_\nu \sum_{j\in K}  q_{\nu j} \int_{[0,1]} \left(m_2(t,r+\rho(1-r),j)-m_2(t,r,\nu) \right)G_j(d\rho),}\\
m_2(0,r,\nu) = r^2,
\end{array}
\right.
\end{equation}
From Lemma 1, we can write the second moment as
\begin{eqnarray}\label{ec: M2a}
m_2(t,r_1 + \rho (1-r) , j) &=& \E [R_t^2| R_0 = r+\rho (1-r), \nu_0 = j ] \nonumber \\
& = & \rho^2 + 2\rho (1-\rho) m(t,r,j) +m_2 (t,r,j)(1-\rho)^2.
\end{eqnarray}


Let  $\mu_j = \int_{[0,1]} \rho G_j (d\rho)$, $\mu_{2,j} = \int_{[0,1]} \rho^2 G_j (d\rho)$. Combining \eqref{ec: SistemaM2} y \eqref{ec: M2a}, we have
\begin{eqnarray*}\label{ec: dM2}
\dfrac{\partial m_2}{\partial t}(t,r,\nu) &=& \sum_{j=1}^k \lambda_{\nu} q_{\nu,j} (\mu_{2,j} + 2(\mu_j - \mu_{2,j}) m(t,r,j) \nonumber\\ 
& + &  (1- 2 \mu_j + \mu_{2,j})m_2(t,r,j) - m_2(t,r,\nu)),
\end{eqnarray*}
which in matrix form is written as
\begin{eqnarray}\label{ec: dM2Mat}
\dfrac{\partial \overline{m}_2}{\partial t} (t,r) & = & \Lambda Q \mu_2 + 2 \Lambda Q (M-M_2) \overline{m}(t,r)\nonumber\\
& + & \Lambda Q (I -2 M +M_2) \overline{m}_2(t,r)  -  \Lambda \overline{m}_2 (t,r).  
\end{eqnarray}

Let $H = [\lambda Q (I-2M + M_2 - \Lambda)]$ y $K= 2 \Lambda Q (M-M_2)$.  Regrouping terms from \eqref{ec: dM2Mat}, we get the following differential equation
\begin{equation}
\dfrac{\partial \overline{m}_2}{\partial t}(t,r) - H \overline{m}_2(t,r) = \Lambda Q \mu_2 K \overline{m}(t,r),
\end{equation}
which has as a solution
\begin{eqnarray}\label{ec: M2sol}
\overline{m}_2(t,r) = e^{Ht} r^2 + \int_0^t e^{-H(t-s)} \Lambda Q \mu_2 ds + \int_0^t e^{H(t-s)} K \overline{m}(s,r)ds. 
\end{eqnarray}
Finally, from  \eqref{eq.1.10} and  \eqref{ec: M2sol}, we obtain
\begin{eqnarray}
Var(R_t) &=&  e^{Ht} r^2 + \int_0^t e^{-H(t-s)} \Lambda Q \mu_2 ds + \int_0^t e^{H(t-s)} K \overline{m}(s,r)ds \nonumber \\
& - & \left( e^{B t} r + \left( e^{B t} - I \right)B^{-1} \Lambda Q \mu \right)^2 \nonumber.
\end{eqnarray}
\end{proof}

\subsubsection{Case of one-state}

In this section  the case of a steady state such that $\nu= \nu_0$ is presented, which shows the intuition of the results obtained for the n-state case. From equation \eqref{ec: SistemaM2} we have
\begin{eqnarray}
\dfrac{\partial \overline{m}_2}{\partial t} (t,r,\nu_0) & = & \lambda \mu_2 + 2 \lambda (\mu-\mu_2) m(t,r,\nu_0) +  \lambda ( \mu_2 - 2 \mu) m_2(t,r,\nu_0)\nonumber\\
m_2(0,r,\nu_0) &=& r^2,
\end{eqnarray}
which is equivalent to 
\begin{eqnarray}\label{eq: reagrupado}
\frac{\partial m_2}{\partial t} (t,r,\nu_0) + c \ m_2(t,r,\nu_0) = a+b\  m(t,r,\nu_0), 
\end{eqnarray}
with $a= \lambda \mu_2$, $b= 2 \lambda (\mu-\mu_2)$ and $c= \lambda (2 \mu -\mu_2)$.
Combining equations \eqref{ec: primer momento 1-s} and \eqref{eq: reagrupado}, we have
\begin{eqnarray}
m_2 (t,r,\nu_0) &=&  \dfrac{a+b}{c} + \dfrac{r-1}{(c-\mu)e^{c t}}   +  \left( r^2 - \dfrac{a+b}{c} - \dfrac{r-1}{(c-\mu)} \right) \frac{1}{e^{ct}}.
\end{eqnarray}
Therefore, the variance of the process $R_t$ is given by
\begin{eqnarray}\label{eq: lim_var}
Var(R_t) =  \nonumber \\
&& \hspace{-2cm} \dfrac{a+b}{c} + \dfrac{r-1}{(c-\mu)e^{c t}}  +  \left( r^2 - \dfrac{a+b}{c} - \dfrac{r-1}{(c-\mu)} \right) \frac{1}{e^{ct}}- \left( 1- \dfrac{1-r}{e^{\mu_{\nu_0} t}}  \right)^2 \nonumber \\
= &&  \dfrac{r-1}{(c-\mu)e^{c t}}  +  \left( r^2 - 1 - \dfrac{r-1}{(c-\mu)} \right) \frac{1}{e^{ct}} +2\dfrac{1-r}{e^{\mu_{\nu_0} t}} - \left(\dfrac{1-r}{e^{\mu_{\nu_0} t}}  \right)^2 \nonumber \\
\leq &&  \frac{2(1-r)}{e^{\mu_{\nu_0} t}},
\end{eqnarray}
providing the functional form of the variance of process  $R_t$ for the case of one-state. Analysing the function that characterises the second moment of the process, and obtain the following results 

\begin{equation*}
\lim_{t \rightarrow 0} m_2(t,r,\nu_0) = r^2,
\end{equation*}
\begin{equation}
\lim_{t \rightarrow \infty} m_2(t,r,\nu_0) = 1,
\end{equation}
describing the dynamics of the second moment of process over time, with a variance of

\begin{equation}\label{eq: var_convergence}
Var(R_t) \leq 2(1-r) e^{-\mu_{\nu_0} t},
\end{equation}

\subsubsection{Case of two-state}

In this section the variance process of $R_t$ for the two-state case is computed, which can be extended to $n$ different states.  From \eqref{ec: dM2Mat} we have
\begin{eqnarray*}
\frac{\partial m_2}{\partial t} (t,r,j) &=&  a \  m_2(t,r,j) + b \ m_2(t,r,i) + c \  m_2(t,r,j) + d +e \ m(t,r,j) +f \ m(t,r.i) \nonumber\\
\frac{\partial m_2}{\partial t} (t,r,i) &=& g \ m_2(t,r,j) + h \ m_2(t,r,i) + k \  m_2(t,r,i) + l + n \ m(t,r,j) + p \ m(t,r,i) \nonumber\\
m_2(0,r,\nu_0) &=& r, \nonumber
\end{eqnarray*}
where the coefficients are defined as follows
\begin{eqnarray*}
a&=& \lambda_j q_{1,1} (1-2\mu_j + \mu_{2,j}) \nonumber\\
b&=& \lambda_j q_{1,2} (1-2\mu_i + \mu_{2,i})\nonumber\\
c&=& \lambda_j  \nonumber\\
d&=& \lambda_j (q_{1,1} \mu_{2,j}+ q_{1,2} \mu_{2,i}) \nonumber\\
e&=&  2\lambda_j q_{1,1} (\mu_j-\mu_{2,j}) \nonumber\\
f&=&  2\lambda_1 q_{1,2} (\mu_i-\mu_{2,i}) \nonumber\\
g&=&  \lambda_i q_{2,1} (1-2 \mu_j+\mu_{2,j}) \nonumber\\
h&=& \lambda_i q_{2,2} (1-2 \mu_i+\mu_{2,i}) \nonumber\\
k&=& \lambda_i \nonumber\\
l &=& \lambda_i (q_{2,1} \mu_{2,j}+ q_{2,2} \mu_{2,i}) \nonumber\\
n &=& 2 \lambda_i q_{2,1} (\mu_j-\mu_{2,j}) \nonumber\\
p &=& 2 \lambda_i q_{2,2} (\mu_i-\mu_{2,i}) .\nonumber\\
\end{eqnarray*}
Here, $\lambda_j, \lambda_i$ represent the jump rate from state $j$ and $i$, respectively; $\mu_j, \mu_i$  the expected value of the jump in state $j$ and $i$, respectively; $q_{1,1}, q_{1,2}, q_{2,1}, q_{2,2}$  the components of the transition matrix $Q$ for two states. 
In this case, the system of differential equations can be solved by Runge-Kutta method \cite{soetaert2010solving}, using parameters defined in the Table \ref{table: parameters}. In Figure \ref{fig: variance path} we plot the evolution of the variance of the process $R_t$,  with initial value $r^2=0$, and demonstrate convergence to zero.
As illustrated in the graphics,  the maximum variance is obtained around the time $t= 5$, which decays rapidly, approaching its asymptotic value equal to zero at period $t=50$. 

\begin{figure}[h]
\centering\includegraphics[width=14cm]{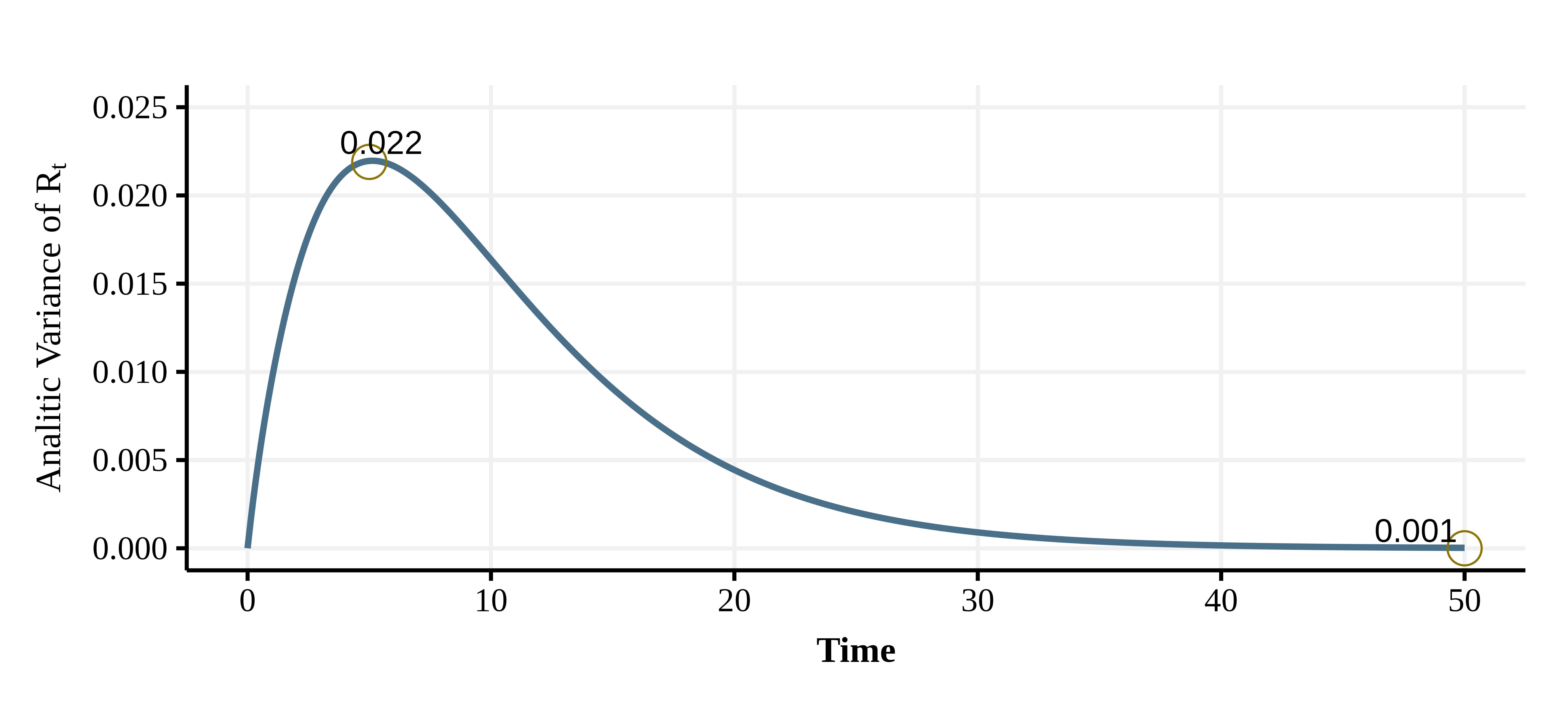}
    \caption{Evolution of variance of $R_t$ process.}\label{fig: variance path}
\end{figure}


\begin{remark}
From Chebyshev inequality (\cite{saw1984chebyshev}) 
\begin{equation}
\Pro( |R_t - m(t,r) | \geq t^{-\alpha}) \leq t^{2\alpha} Var(R_t),
\end{equation}
for all $\alpha>0$. As $t$ tends to infinity, equations \eqref{ec: primer momento 1-s}, \eqref{eq: lim_var}, and \eqref{eq: var_convergence} give
\begin{equation}
\Pro( |R_t - 1 | \geq t^{-\alpha})  \leq 2(1-r) e^{-\mu_{\nu_0} t} t^{2\alpha} = 0,
\end{equation}

%

%
%
and therefore, the process $R_t$ converge in probability.
\end{remark}
%



\section{Simulation and estimation in the record process}

In this section we proceed to perform a series of simulations in order to characterise the record phenomenon from an empirical perspective. Secondly, we present the estimation method of the  process $R_t$ for different states of the time series.  This study allow us to have a concrete tool to generate applications of the stochastic $R_t$ process.

\subsection{Simulation of record process}

For  simulate the process $R_t$ we will use the approximation made by Thomas \cite{thomas2019stochastic} for PDMP processes.  
We will use the theoretical development developed in  previous section for  two-state case to check the simulation against the analytical results. Through this application, we  present a methodology that can be used in various areas of science, such as economy (see \cite{chang2009macroeconomic, liu2011evolving}), pharmacology (see \cite{Lisandro}), finance (see, \cite{liu2012stock}). 

First, we must characterise the changes of state of the process $R_t$ for each of the jumps. Each jump of the process is characterised by a time occurrence and a jump size. Both characteristics are modelled independently by defining a transition matrix $Q$.

The algorithm is described below for the case when the  process $R_t$ has  two-state, and  the times between jumps has an exponential distribution with parameters $\lambda_i$ and $\lambda_j$. When a new record is reached, the size of the jump has a Beta distribution depending on whether they belong to the $\nu_i$ or $\nu_j$ state, respectively. The parameters of the corresponding Beta distributons are ($\alpha_i$, $\alpha_j$) and ($\beta_i$, $\beta_j$).
The extension to n-states is natural  by  adding sections referring to the  states.
\begin{algorithm}
\caption{Simulation of the $R_t$ process for two states}\label{sim: proc}
\begin{algorithmic}
\Require Set the initial values of the process $R_0=r$ and the initial state $\nu_0$. Define the probabilities of the states of the transition matrix $Q$. Additionally it sets the time counter $k$ and inicial time $t$ to zero.
\While{$t \leq T$}
\If{$\nu_t = i$}
    \State Simulate $X \sim Exponential(\lambda_i) $
    \State Simulate $Y \sim Beta(\alpha_i, \beta_i)$
    \State Simulate $u \sim Uniform(0,1)$
    \If{$Q(1,1)> u$}
    	\State $\nu_t = i$
    \Else 
    	\State $\nu_t = j$
    \EndIf
    \State $t = t+ X$
    \State $R_{t}=R_{t-k} + Y(1-R_{t-k})$
    \State $k =  X$
\Else
    \State Simulate $X \sim Exponential(\lambda_j) $
    \State Simulate $Y \sim Beta(\alpha_j, \beta_j)$
    \State Simulate $u \sim Uniform(0,1)$
    \If{$Q(2,1)> u$}
    	\State $\nu_t = i$
    \Else 
    	\State $\nu_t = j$
    \EndIf
    \State $t = t+ X$
    \State $R_{t} = R_{t-k} + Y(1-R_{t-k})$
    \State $k =  X$
\EndIf
\EndWhile
\end{algorithmic}
\end{algorithm}

In the algorithm we can identify two cases that we can associate to each of the two states of the process. For both, we simulate the times between jumps and the size of the jump. In addition,  a uniform random variable is simulated that allows us  to simulate the permanence or exit of the current state in which the process $R_t$ is.

For example, we assume a two-state matrix $Q$ with the  coefficients defined in \eqref{eq: transition_matrix}. We further assume the distributions of the times between jumps and the distributions of the jumps expressed in Table \ref{table: parameters}. 

Figure \ref{fig: path} shows a set of sample paths simulations to illustrate the behaviour of the process over time, considering the characteristics of the states presented above. 

Similarly, figure \ref{fig: path2} shows the $10,000$ sample path simulation, where the dotted line corresponds to the 5th and 95th percentiles, respectively. Through this procedure we can estimate the expected value and variance of the $R_t$ process by means of a simulation analysis.

\begin{figure}[H]
  \begin{subfigure}{14cm}
    \centering\includegraphics[width=10cm]{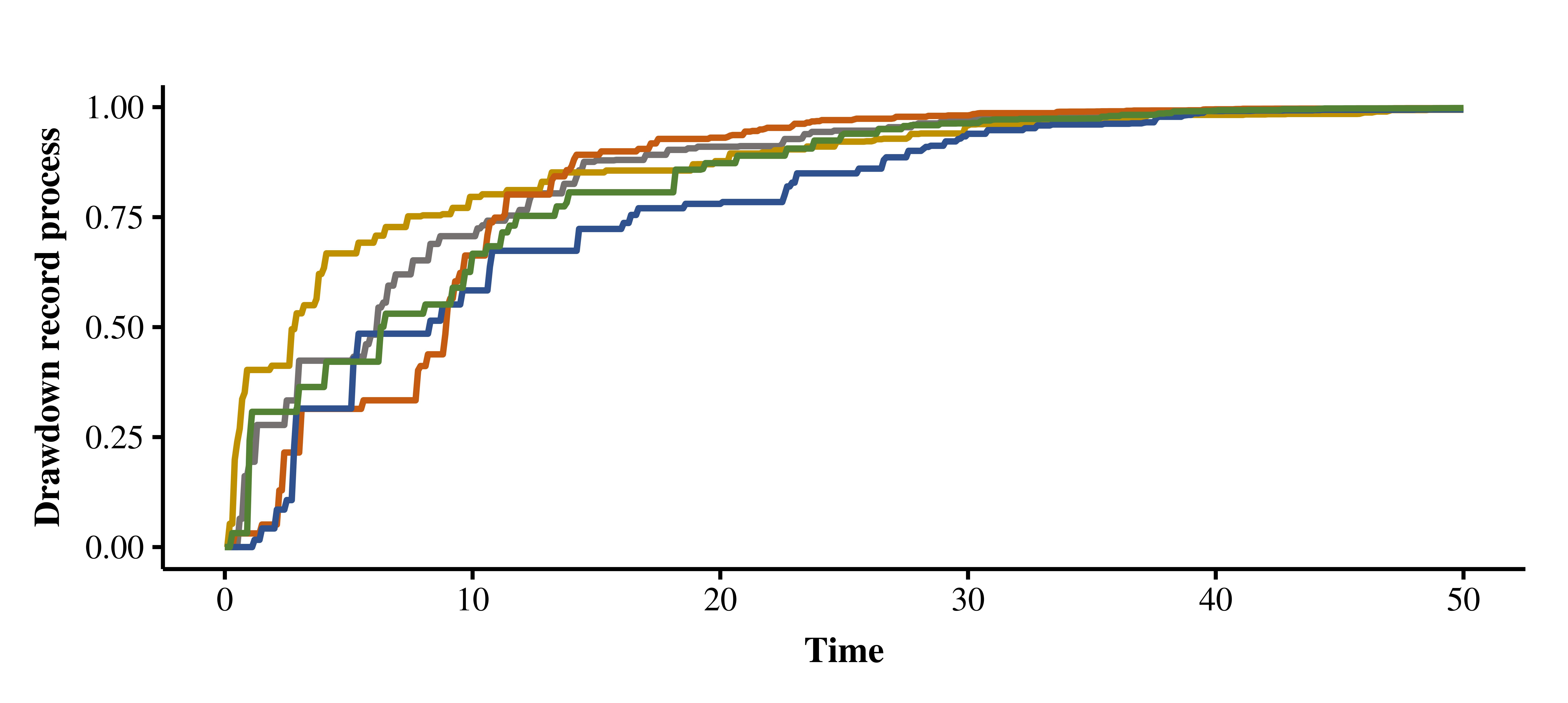}
    \caption{Sample path of $R_t$ process}\label{fig: path}
  \end{subfigure}
  \begin{subfigure}{14cm}
    \centering\includegraphics[width=10cm]{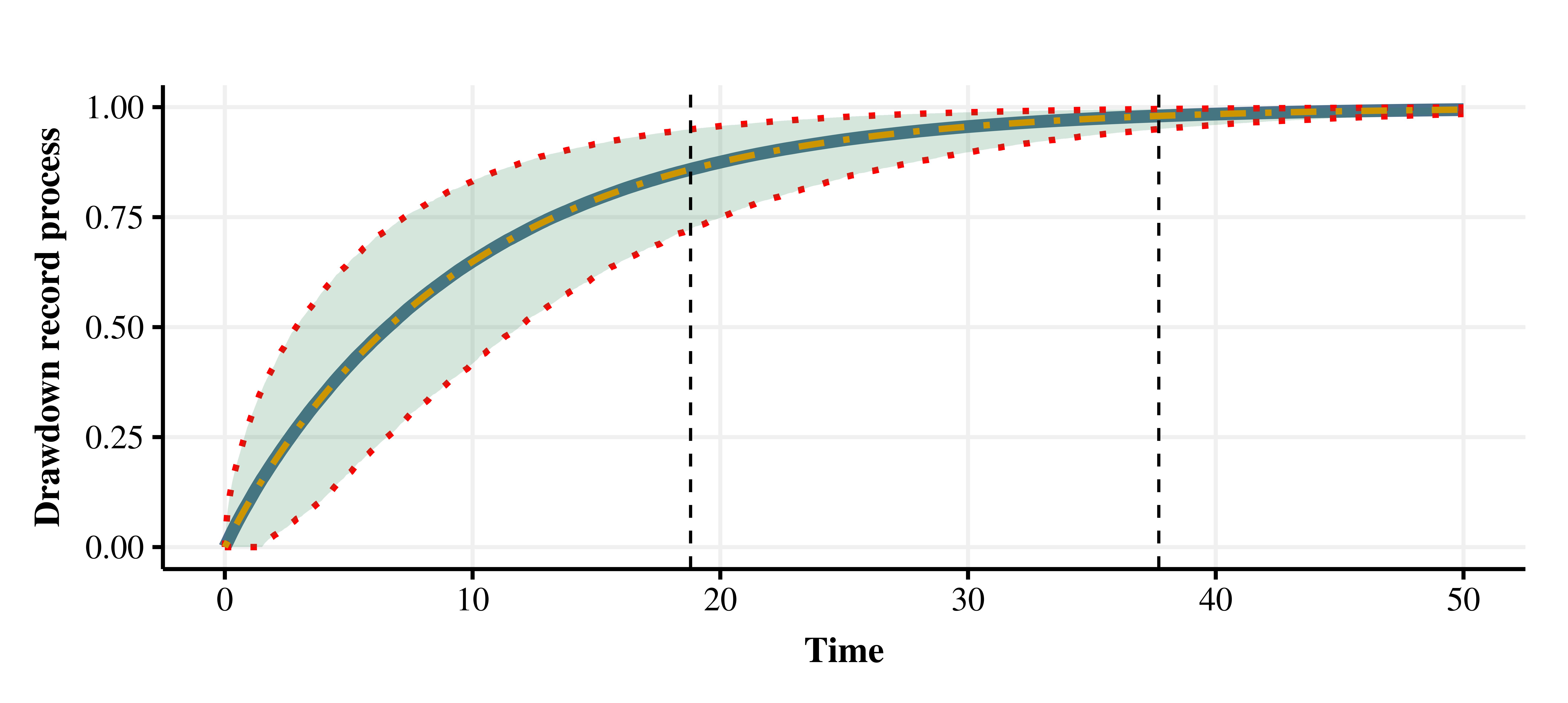}
    \caption{Sample path of $R_t$ process, $10,000$ simulations}\label{fig: path2}
  \end{subfigure} 
  \caption{Sample path of $R_t$ process for a sample of 10 simulate paths and a large sample of 10,000 simulate paths. }
\end{figure}

We can observe that, for this parameterisation, after period $10$ approximately 90\% of the paths exceed the value of $0.5$, after period $20$ approximately 95\%  of them exceed the value of $0.75$, and after period $40$ approximately 95\% of the paths have a difference less than $0.02$ at the limit $1$. 

On the other hand, the greatest variability of the trajectories can be observed between period $5$ and $10$, observing the convergence of the trajectories as time grows as shown in figure \ref{fig: path2}.  

To verify the efficiency of the simulation process, a dot-dashed line corresponding to the analytical mean calculated for those process parameters $R_t$ is placed in the figure. We can see that the computational approximation of the process is very similar to the one obtained through the analytical solution.

This approach makes it possible to characterise the evolution of the main statistics of the $R_t$ process numerically, providing a quick and simple tool for understanding the evolution of the phenomenon. 

Figure \ref{fig: Variance} shows the evolution of the variance of the $R_t$ process, observing a convergence to zero that follows an exponential behaviour, as described in the previous sections.

Similar to Figure \ref{fig: path2}, the analytical solution of the variance by means of a dot-dashed line has been used to show the efficiency of the computational process. Like the mean, the computationally estimated variance is similar to the analytically estimated variance.

\begin{figure}[H]
\centering\includegraphics[width=10cm]{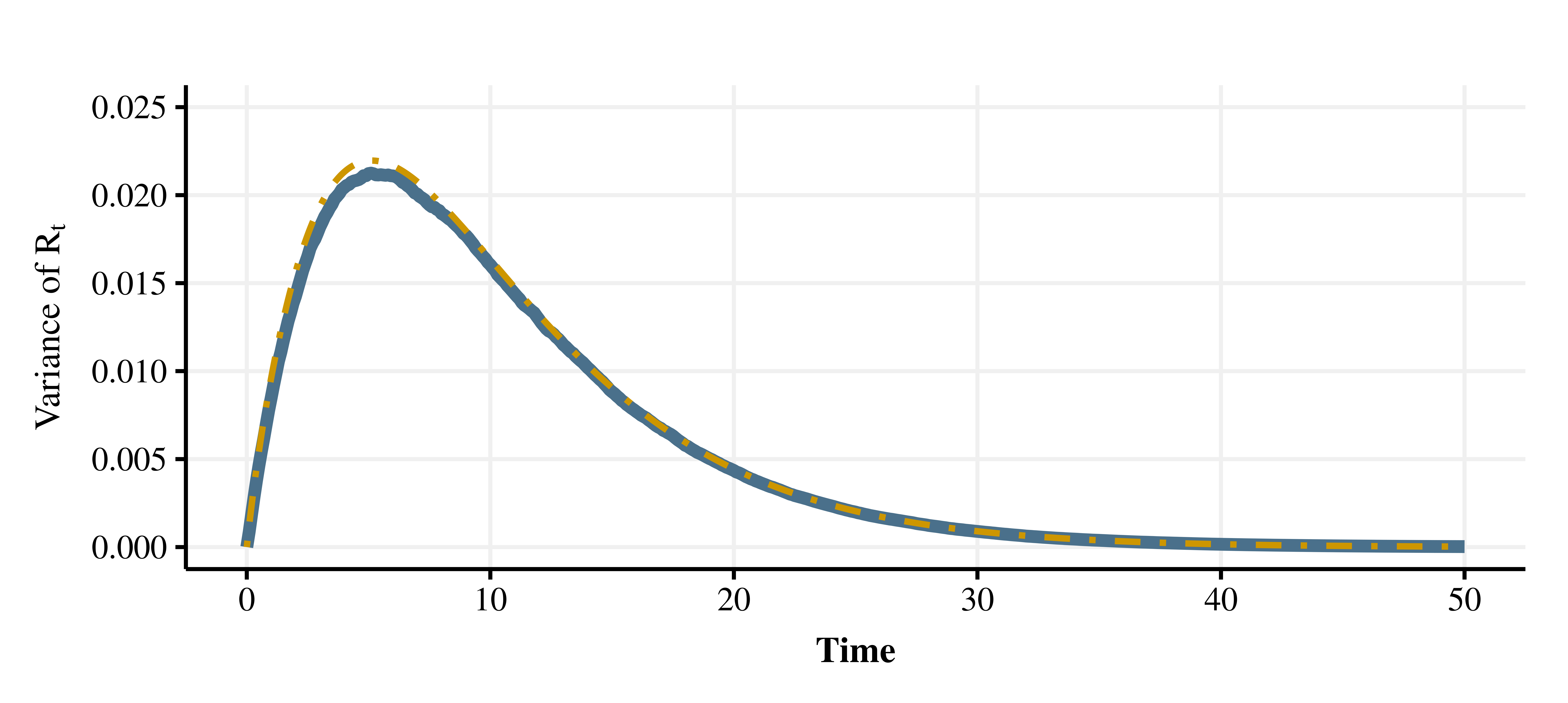}
    \caption{Estimated variance of $R_t$ process, $10,000$ simulations}\label{fig: Variance}

\end{figure}

\subsection{Estimation of the parameters of PDMP record process}

Several studies have developed estimation techniques for PDMP processes. Among which we can mention  \cite{chiquet2009piecewise}, where the authors  investigate some methods to estimate the parameters of the dynamical system, involving  maximum likelihood estimation for the infinitesimal generator of the underlying jump Markov process. In this article we present an estimation based on  the maximum likelihood principles  and distributions on jumps size of the $R_t$ process. 
Algorithm \ref{A2} shows the  way to estimate the parameters of a PDMP - $R_t$ process. 
Based on a specific state, the algorithm estimates the time distribution between jumps and the jump size distribution. We define the function $Par_{exp}(\cdot)$ that allows us to calculate the parameters of the exponential distribution of the vector $x_i$, $x_j$, respectively. Finally,  $Par_{beta}(\cdot)$ allows us to calculate the parameters of the Beta distribution of the vector $y_i$, and $y_j$, respectively.   

We define the function $Like(\cdot)$ that allows us to calculate the maximum likelihood, considering the parameters of $(x_i, y_i)$, and of $(x_j, y_j)$ jointly, obtaining the vector $Likelilihood_i$, and $Likelilihood_j$. Based on the above calculations. At the end,  we define $dif$ which will be a criterion for the convergence of the algorithm.

\begin{algorithm}
\caption{Estimation of the parameters in  $R_t$ process for two states}\label{sim: estim}
\label{A2}
\begin{algorithmic}
\Require Set the initial values of the labels of states of process $R_t$ and the initial state $\nu_0$. Define a guess of the probabilities of the states of the transition matrix $Q$ and initial value $Likelihood_0$. Set a $\delta$ value for measure de convergence of algorithm and a value inicial $dif>\delta$ to start the iterative process. Additionally it sets the time counter $k$ and inicial time $t$ to zero and $s=1$, $g=1$, $r=1$.
\While{$dif \geq \delta$}
\While{$t \leq T$}
\If{$E_r = i$}
    \State $x_{i,s} = x_r$
    \State $y_{i,s} = y_r$
    \State $s = s+ 1$
\Else
    \State $x_{j,g} = x_r$
    \State $y_{j,g} = y_r$
    \State $g = g+ 1$
\EndIf
\State $t = t+ x_r$
\State $r = r+ 1$
\EndWhile
\State $Likelihood_{i} =Like(Par_{exp}(x_i),Par_{beta}(y_i),x,y)$  
\State $Likelihood_{j} =Like(Par_{exp}(x_j),Par_{beta}(y_j),x,y)$
\State $dif = Likelihood_0 -\left(\sum(Likelihood_i) +\sum(Likelihood_j) \right) $
\State $Likelihood_0 = \sum(Likelihood_i) +\sum(Likelihood_j) $
\EndWhile
\end{algorithmic}
\end{algorithm}

Once the distributions have been estimated and the labels of the specific state to which it belongs have been assigned, it is necessary to estimate the transition matrix $Q$. This estimation of the transition matrix will input a new estimation process of Algorithm \ref{A2}, repeating the process until it converges. The following Algorithm \ref{A3} allows to generate the classification of the state labels. The first cycle generates the new classification of the labels, given the maximum likelihood calculations defined in Algorithm \ref{A2}. The second cycle allows the estimation of the state matrix $Q$. Consider that the algorithm can be repeated until there are no variations of the state matrices and the calculated likelihoods.

\begin{algorithm}
\caption{Estimation of the states in the PDMP- $R_t$ process with two states}\label{A3}
\begin{algorithmic}
\Require Set the initial values of the labels of states of process $R_t$ and the initial state $\nu_0$. Define a guess of the probabilities of the states of the transition matrix $Q$ and initial value $Likelihood_0$. Set a $\delta$ value for measure de convergence of algorithm and a value inicial $dif>\delta$ to start the iterative process. Additionally it sets the time counter $k$ and inicial time $t$ to zero and $s=1$, $g=1$, $r=1$.
\For{$r = 1:R$}
\If{$Likelihood_{i,r} > Likelihood_{j,r}$}
    \State $E_r = i$
\Else
    \State $E_r = j$
\EndIf

\EndFor

\For{$r = 1:R$}
\If{$E_r = i$}
    \State $C_i = C_i + 1$
    \If{$E_{r+1} = i$}
    	\State $C_{i,i} = C_{i,i} + 1$
	\EndIf
\Else
    \State $C_j = C_j + 1$
    \If{$E_{r+1} = j$}
    	\State $C_{j,j} = C_{j,j} + 1$
	\EndIf
\EndIf
\EndFor
\State $Q_{i,i} = C_{i,i}/C_i$
\State $Q_{j,j} = C_{j,j}/C_j$
\State $Q_{i,j} = 1-Q_{i,i}$
\State $Q_{j,i} = 1-Q_{j,j}$
\end{algorithmic}
\end{algorithm}

\section{Financial applications}

Figure \ref{fig: SP500} shows the evolution of the S$\&$P500 index which represents the $500$ most traded companies on the New York Stock Exchange from 1 January 1950 to 31 December 2019, public data extracted from https://finance.yahoo.com/. The S$\&$P500 is used as a proxy indicator for economic performance, which assumes that there is full transparency in the investment positions of each economic agent.

\begin{figure}[H]
\centering
  \centering\includegraphics[width=10cm]{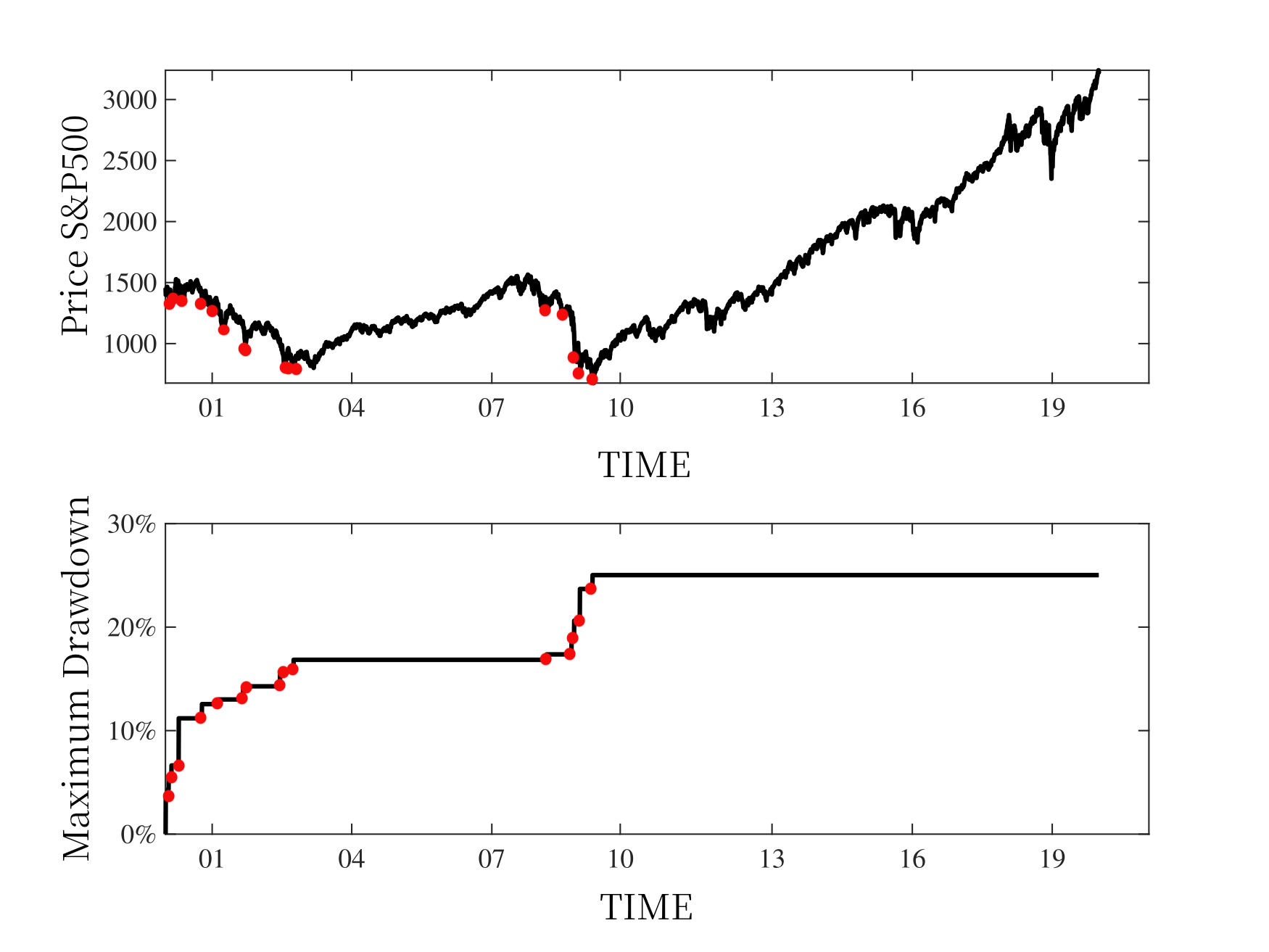} 
\caption{S\&P500 index and Maximum Drawdown.}\label{fig: SP500}
\end{figure}

When analysing the figure \ref{fig: SP500}, we can observe periods of time in which the observation of a new record is more recurrent, while in other periods the frequency of new records is lower. The red circles highlight the moment at which a new record is set. The lower time series shows the value of the record for a period t, and the size of the jump corresponds to the difference of the record with the immediately preceding period. This suggests the existence of at least 2 different states of nature to observe a new record in this time series.

 




Using the Algorithms \ref{A2} and \ref{A3} on the data obtained from the maximum drawdown process for the S\&P500 time series showing the evolution of records, we have the following results:
\begin{table}[h]
\centering
\begin{tabular}{lcl}
\cline{2-3}
                                    & \multicolumn{2}{c}{Distribution}                                   \\
                                    & $\nu_t= i$                        & \multicolumn{1}{c}{$\nu_t= j$} \\ \hline
\multicolumn{1}{r}{$X \sim$} & $Exponential(0.47)$                 & $Exponential(5.4\times 10^{-4})$               \\
\multicolumn{1}{r}{$\E[X] =$}     & \multicolumn{1}{l}{$2.095$} & $1819$       \\         
\multicolumn{1}{r}{$\rho \sim$}     & \multicolumn{1}{l}{$Beta(1.83, 145.90)$} & $Beta(0.77, 47.86)$                  \\
\multicolumn{1}{r}{$\E[\rho]=$}     & \multicolumn{1}{l}{$0.012$} & $0.015$                  \\ \hline
\end{tabular}
\caption{Parameter Inference of the  S\&P500 -$R_t$ process.}
\end{table}
which represent the parameter values of the estimated distributions for each of the two states present in the $R_t$ process. In addition, by means of Algorithm xxx we can estimate the transition matrix $Q$. These results allow us to estimate the long-term behaviour of the time series shown in Figure \ref{fig: path_empirica} and \ref{fig: variance_empirica}.
\begin{equation}
Q= \left[
\begin{matrix}
q_{1,1} &  q_{1,2}\\
q_{2,1} &  q_{2,2}
\end{matrix}\right] = \left[
\begin{matrix}
0.883 &  0.117\\
0.750 &  0.250
\end{matrix}
\right],
\end{equation}

With all the parameters obtained according to Algorithms 2 and 3 we can numerically calculate the mean and the variance of the estimated processes which we will call $\hat{R}_r$. In figures \ref{fig: path_empirica} and \ref{fig: variance_empirica} we show the estimation of the mean and variance of the S\&P500 - $R_t$ process.

\begin{figure}[H]
\centering\includegraphics[width=10cm]{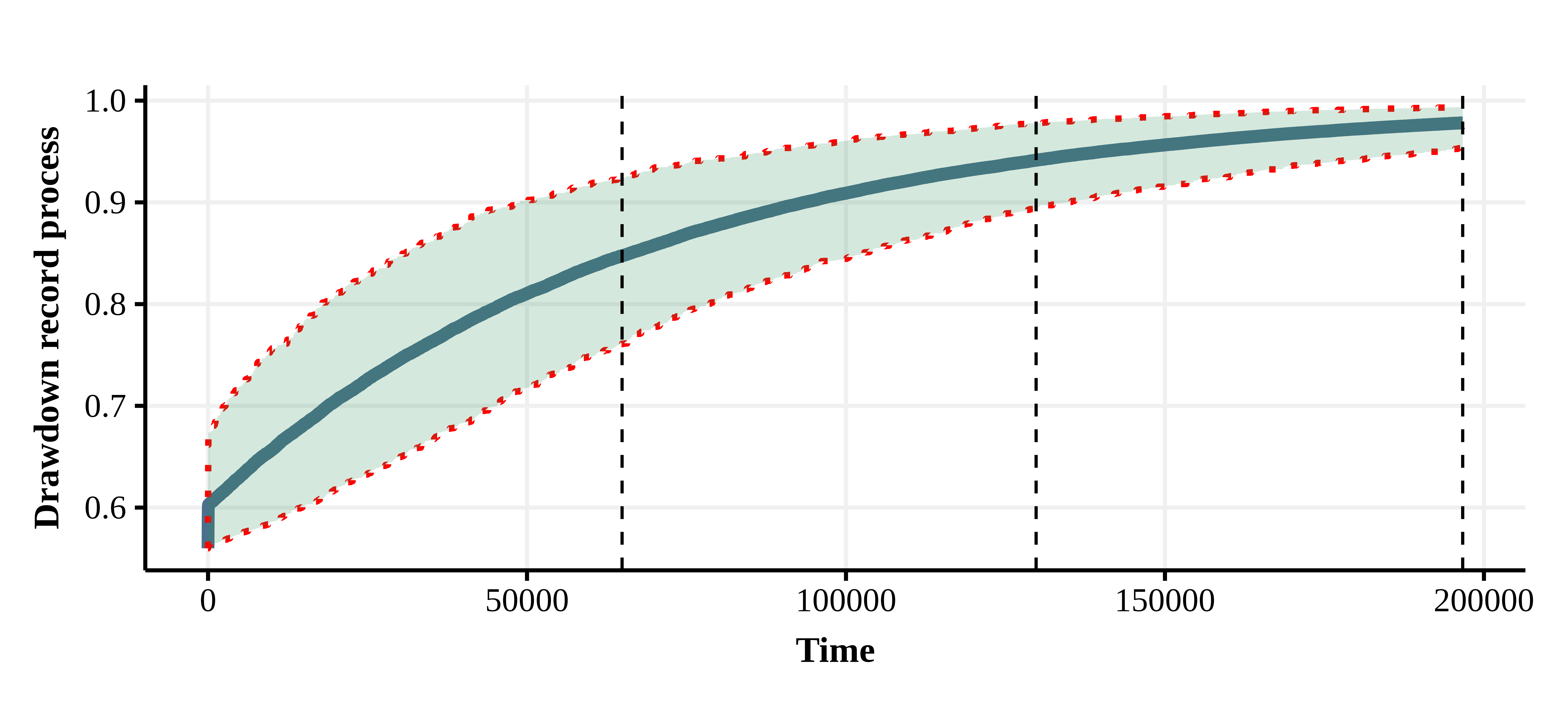}
    \caption{Sample path of $\hat{R}_t$ process, 10,000 simulations}\label{fig: path_empirica}
\end{figure}

\begin{figure}[H]
\centering\includegraphics[width=10cm]{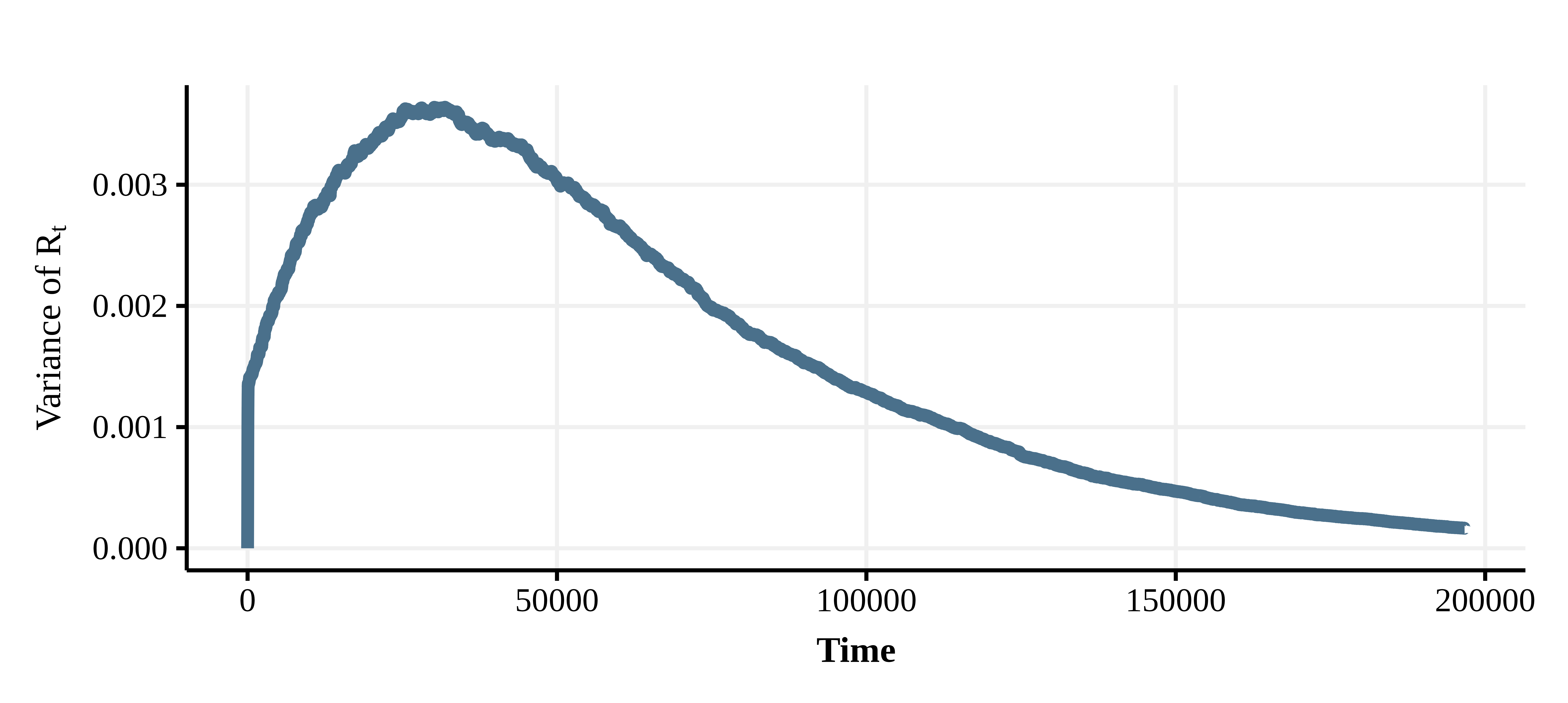}
    \caption{Variance of $\hat{R}_t$ process, 10,000 simulations}\label{fig: variance_empirica}
\end{figure}


\section{Conclusions}

The maximum drawdown is a widely employed risk management strategy within the domain of capital markets, whereby the maximum observed decline over a specified time interval can be identified. This research establishes a link between a piecewise deterministic Markov process and record theory, specifically in relation to the maximum drawdown. 

This research presents a statistical analysis of a record process utilising a maximum drawdown approach. The characteristics that drive the maximum drawdown process, such as the mean and variance, were defined, thereby providing analytical tools for defining the characteristics of the limit process, in which a finite number of states are considered. This latter process has numerous applications in the context of financial risk management.

In addition, computational estimation techniques were developed, allowing the estimation of the parameters governing a given time series, such as hop occurrence times, process hop sizes, and the state matrix, which makes it possible to simulate processes and build useful applications.

The expected future work includes the application of a generalisation of this work, considering Mittag-Leffler type distributions as the distribution of the time between jumps, which requires modifying the differential operators. In addition to considering modifications to the process, such as the possibility of the jump range being greater than one, which has other theoretical and practical implications.

\section*{Acknowledgments}

Soledad Torres was partially supported by Fondecyt project number 1230807 and Matham- sud SMILE AMSUD230032,Proyecto ECOS210037 (C21E07), Mathamsud AMSUD210023 and Fondecyt Regular No. 1221373.

Lisandro Fermín was partially supported by by Fondecyt project number 1230807, MATH-AMSUD 23-MATH-12, MathAmSud Tomcat 22-math-10, and the project Labex MME-DII-2024-2-0000000018 (ANR11-LBX-0023-01).


\bibliographystyle{elsarticle-harv} 
\bibliography{biblio.bib}

\begin{thebibliography}{28}
\expandafter\ifx\csname natexlab\endcsname\relax\def\natexlab#1{#1}\fi
\providecommand{\url}[1]{\texttt{#1}}
\providecommand{\href}[2]{#2}
\providecommand{\path}[1]{#1}
\providecommand{\DOIprefix}{doi:}
\providecommand{\ArXivprefix}{arXiv:}
\providecommand{\URLprefix}{URL: }
\providecommand{\Pubmedprefix}{pmid:}
\providecommand{\doi}[1]{\href{http://dx.doi.org/#1}{\path{#1}}}
\providecommand{\Pubmed}[1]{\href{pmid:#1}{\path{#1}}}
\providecommand{\bibinfo}[2]{#2}
\ifx\xfnm\relax \def\xfnm[#1]{\unskip,\space#1}\fi
\bibitem[{Azais and Bouguet(2018)}]{azais2018statistical}
\bibinfo{author}{Azais, R.}, \bibinfo{author}{Bouguet, F.}, \bibinfo{year}{2018}.
\newblock \bibinfo{title}{Statistical Inference for Piecewise-deterministic Markov Processes}.
\newblock \bibinfo{publisher}{John Wiley \& Sons}.
\bibitem[{Chang(2009)}]{chang2009macroeconomic}
\bibinfo{author}{Chang, K.L.}, \bibinfo{year}{2009}.
\newblock \bibinfo{title}{Do macroeconomic variables have regime-dependent effects on stock return dynamics? evidence from the markov regime switching model}.
\newblock \bibinfo{journal}{Economic Modelling} \bibinfo{volume}{26}, \bibinfo{pages}{1283--1299}.
\bibitem[{Chiquet et~al.(2009)Chiquet, Limnios and Eid}]{chiquet2009piecewise}
\bibinfo{author}{Chiquet, J.}, \bibinfo{author}{Limnios, N.}, \bibinfo{author}{Eid, M.}, \bibinfo{year}{2009}.
\newblock \bibinfo{title}{Piecewise deterministic markov processes applied to fatigue crack growth modelling}.
\newblock \bibinfo{journal}{Journal of Statistical Planning and Inference} \bibinfo{volume}{139}, \bibinfo{pages}{1657--1667}.
\bibitem[{Cui et~al.(2019)Cui, Kirkby and Nguyen}]{cui2019general}
\bibinfo{author}{Cui, Z.}, \bibinfo{author}{Kirkby, J.L.}, \bibinfo{author}{Nguyen, D.}, \bibinfo{year}{2019}.
\newblock \bibinfo{title}{A general framework for time-changed markov processes and applications}.
\newblock \bibinfo{journal}{European Journal of Operational Research} \bibinfo{volume}{273}, \bibinfo{pages}{785--800}.
\bibitem[{Cui et~al.(2018)Cui, Lee and Liu}]{cui2018single}
\bibinfo{author}{Cui, Z.}, \bibinfo{author}{Lee, C.}, \bibinfo{author}{Liu, Y.}, \bibinfo{year}{2018}.
\newblock \bibinfo{title}{Single-transform formulas for pricing asian options in a general approximation framework under markov processes}.
\newblock \bibinfo{journal}{European Journal of Operational Research} \bibinfo{volume}{266}, \bibinfo{pages}{1134--1139}.
\bibitem[{Davis(1993)}]{DAV}
\bibinfo{author}{Davis, M.}, \bibinfo{year}{1993}.
\newblock \bibinfo{title}{Markov Models \& Optimization}. volume~\bibinfo{volume}{49}.
\newblock \bibinfo{publisher}{CRC Press}.
\bibitem[{Embrechts et~al.(2013)Embrechts, Kl{\"u}ppelberg and Mikosch}]{embrechts2013modelling}
\bibinfo{author}{Embrechts, P.}, \bibinfo{author}{Kl{\"u}ppelberg, C.}, \bibinfo{author}{Mikosch, T.}, \bibinfo{year}{2013}.
\newblock \bibinfo{title}{Modelling extremal events: for insurance and finance}. volume~\bibinfo{volume}{33}.
\newblock \bibinfo{publisher}{Springer Science \& Business Media}.
\bibitem[{Fermín and Lévy-Véhel(2020)}]{Lisandro}
\bibinfo{author}{Fermín, L.J.}, \bibinfo{author}{Lévy-Véhel, J.}, \bibinfo{year}{2020}.
\newblock \bibinfo{title}{Variability and singularity arising from a piecewise-deterministic markov process applied to model poor patient compliance in the multi-iv case}.
\newblock \bibinfo{journal}{Journal of Applied Statistics} \bibinfo{volume}{47}, \bibinfo{pages}{2525--2545}.
\newblock \URLprefix \url{https://doi.org/10.1080/02664763.2019.1711030}, \DOIprefix\doi{10.1080/02664763.2019.1711030}, \href{http://arxiv.org/abs/https://doi.org/10.1080/02664763.2019.1711030}{{\tt arXiv:https://doi.org/10.1080/02664763.2019.1711030}}.
\bibitem[{Gomes and Guillou(2015)}]{gomes2015extreme}
\bibinfo{author}{Gomes, M.I.}, \bibinfo{author}{Guillou, A.}, \bibinfo{year}{2015}.
\newblock \bibinfo{title}{Extreme value theory and statistics of univariate extremes: a review}.
\newblock \bibinfo{journal}{International statistical review} \bibinfo{volume}{83}, \bibinfo{pages}{263--292}.
\bibitem[{Kouretas et~al.(2006)Kouretas, Koutoumpas and Lygeros}]{kouretas2006parameter}
\bibinfo{author}{Kouretas, P.}, \bibinfo{author}{Koutoumpas, K.}, \bibinfo{author}{Lygeros, J.}, \bibinfo{year}{2006}.
\newblock \bibinfo{title}{Parameter identification for piecewise deterministic markov processes: A case study on a biochemical network}.
\newblock \bibinfo{journal}{IFAC Proceedings Volumes} \bibinfo{volume}{39}, \bibinfo{pages}{172--178}.
\bibitem[{Landriault et~al.(2021)Landriault, Li and Lkabous}]{landriault2021analysis}
\bibinfo{author}{Landriault, D.}, \bibinfo{author}{Li, B.}, \bibinfo{author}{Lkabous, M.A.}, \bibinfo{year}{2021}.
\newblock \bibinfo{title}{On the analysis of deep drawdowns for the l{\'e}vy insurance risk model}.
\newblock \bibinfo{journal}{Insurance: Mathematics and Economics} \bibinfo{volume}{100}, \bibinfo{pages}{147--155}.
\bibitem[{Leal and de~Melo~Mendes(2005)}]{leal2005maximum}
\bibinfo{author}{Leal, R.P.C.}, \bibinfo{author}{de~Melo~Mendes, B.V.}, \bibinfo{year}{2005}.
\newblock \bibinfo{title}{Maximum drawdown: Models and applications}.
\newblock \bibinfo{journal}{The Journal of Alternative Investments} \bibinfo{volume}{7}, \bibinfo{pages}{83--91}.
\bibitem[{Li and Zhou(2022)}]{li2022parisian}
\bibinfo{author}{Li, S.}, \bibinfo{author}{Zhou, X.}, \bibinfo{year}{2022}.
\newblock \bibinfo{title}{The parisian and ultimate drawdowns of l{\'e}vy insurance models}.
\newblock \bibinfo{journal}{Insurance: Mathematics and Economics} \bibinfo{volume}{107}, \bibinfo{pages}{140--160}.
\bibitem[{Lin and Buchler(2018)}]{lin2018efficient}
\bibinfo{author}{Lin, Y.T.}, \bibinfo{author}{Buchler, N.E.}, \bibinfo{year}{2018}.
\newblock \bibinfo{title}{Efficient analysis of stochastic gene dynamics in the non-adiabatic regime using piecewise deterministic markov processes}.
\newblock \bibinfo{journal}{Journal of The Royal Society Interface} \bibinfo{volume}{15}, \bibinfo{pages}{20170804}.
\bibitem[{Liu and Mumtaz(2011)}]{liu2011evolving}
\bibinfo{author}{Liu, P.}, \bibinfo{author}{Mumtaz, H.}, \bibinfo{year}{2011}.
\newblock \bibinfo{title}{Evolving macroeconomic dynamics in a small open economy: An estimated markov switching dsge model for the uk}.
\newblock \bibinfo{journal}{Journal of Money, Credit and Banking} \bibinfo{volume}{43}, \bibinfo{pages}{1443--1474}.
\bibitem[{Liu et~al.(2012)Liu, Margaritis and Wang}]{liu2012stock}
\bibinfo{author}{Liu, X.}, \bibinfo{author}{Margaritis, D.}, \bibinfo{author}{Wang, P.}, \bibinfo{year}{2012}.
\newblock \bibinfo{title}{Stock market volatility and equity returns: Evidence from a two-state markov-switching model with regressors}.
\newblock \bibinfo{journal}{Journal of Empirical Finance} \bibinfo{volume}{19}, \bibinfo{pages}{483--496}.
\bibitem[{Magdon-Ismail and Atiya(2004)}]{magdon2004maximum}
\bibinfo{author}{Magdon-Ismail, M.}, \bibinfo{author}{Atiya, A.F.}, \bibinfo{year}{2004}.
\newblock \bibinfo{title}{Maximum drawdown}.
\newblock \bibinfo{journal}{Risk Magazine} \bibinfo{volume}{17}, \bibinfo{pages}{99--102}.
\bibitem[{Mendoza-Arriaga and Linetsky(2016)}]{mendoza2016multivariate}
\bibinfo{author}{Mendoza-Arriaga, R.}, \bibinfo{author}{Linetsky, V.}, \bibinfo{year}{2016}.
\newblock \bibinfo{title}{Multivariate subordination of markov processes with financial applications}.
\newblock \bibinfo{journal}{Mathematical Finance} \bibinfo{volume}{26}, \bibinfo{pages}{699--747}.
\bibitem[{Pospisil and Vecer(2010)}]{pospisil2010portfolio}
\bibinfo{author}{Pospisil, L.}, \bibinfo{author}{Vecer, J.}, \bibinfo{year}{2010}.
\newblock \bibinfo{title}{Portfolio sensitivity to changes in the maximum and the maximum drawdown}.
\newblock \bibinfo{journal}{Quantitative Finance} \bibinfo{volume}{10}, \bibinfo{pages}{617--627}.
\bibitem[{Power(2004)}]{power2004risk}
\bibinfo{author}{Power, M.}, \bibinfo{year}{2004}.
\newblock \bibinfo{title}{The risk management of everything}.
\newblock \bibinfo{journal}{The Journal of Risk Finance} .
\bibitem[{Rudnicki and Tyran-Kami{\'n}ska(2015)}]{rudnicki2015piecewise}
\bibinfo{author}{Rudnicki, R.}, \bibinfo{author}{Tyran-Kami{\'n}ska, M.}, \bibinfo{year}{2015}.
\newblock \bibinfo{title}{Piecewise deterministic markov processes in biological models}, in: \bibinfo{booktitle}{Semigroups of operators-theory and applications}. \bibinfo{publisher}{Springer}, pp. \bibinfo{pages}{235--255}.
\bibitem[{Saw et~al.(1984)Saw, Yang and Mo}]{saw1984chebyshev}
\bibinfo{author}{Saw, J.G.}, \bibinfo{author}{Yang, M.C.}, \bibinfo{author}{Mo, T.C.}, \bibinfo{year}{1984}.
\newblock \bibinfo{title}{Chebyshev inequality with estimated mean and variance}.
\newblock \bibinfo{journal}{The American Statistician} \bibinfo{volume}{38}, \bibinfo{pages}{130--132}.
\bibitem[{Sch{\"a}l(1998)}]{schal1998piecewise}
\bibinfo{author}{Sch{\"a}l, M.}, \bibinfo{year}{1998}.
\newblock \bibinfo{title}{On piecewise deterministic markov control processes: control of jumps and of risk processes in insurance}.
\newblock \bibinfo{journal}{Insurance: Mathematics and Economics} \bibinfo{volume}{22}, \bibinfo{pages}{75--91}.
\bibitem[{Schmidli(2010)}]{schmidli2010conditional}
\bibinfo{author}{Schmidli, H.}, \bibinfo{year}{2010}.
\newblock \bibinfo{title}{Conditional law of risk processes given that ruin occurs}.
\newblock \bibinfo{journal}{Insurance: Mathematics and Economics} \bibinfo{volume}{46}, \bibinfo{pages}{281--289}.
\bibitem[{Soetaert et~al.(2010)Soetaert, Petzoldt and Setzer}]{soetaert2010solving}
\bibinfo{author}{Soetaert, K.}, \bibinfo{author}{Petzoldt, T.}, \bibinfo{author}{Setzer, R.W.}, \bibinfo{year}{2010}.
\newblock \bibinfo{title}{Solving differential equations in r: package desolve}.
\newblock \bibinfo{journal}{Journal of statistical software} \bibinfo{volume}{33}, \bibinfo{pages}{1--25}.
\bibitem[{Thomas(2019)}]{thomas2019stochastic}
\bibinfo{author}{Thomas, N.}, \bibinfo{year}{2019}.
\newblock \bibinfo{title}{Stochastic numerical methods for Piecewise Deterministic Markov Processes: applications in Neuroscience}.
\newblock Ph.D. thesis. Sorbonne universit{\'e}.
\bibitem[{Zhang(2009)}]{zhang2009effect}
\bibinfo{author}{Zhang, H.}, \bibinfo{year}{2009}.
\newblock \bibinfo{title}{Effect of derivative accounting rules on corporate risk-management behavior}.
\newblock \bibinfo{journal}{Journal of accounting and economics} \bibinfo{volume}{47}, \bibinfo{pages}{244--264}.
\bibitem[{Zheng and Zhang(2017)}]{zheng2017modeling}
\bibinfo{author}{Zheng, L.}, \bibinfo{author}{Zhang, X.}, \bibinfo{year}{2017}.
\newblock \bibinfo{title}{Modeling and analysis of modern fluid problems}.
\newblock \bibinfo{publisher}{Academic Press}.

\end{thebibliography}

\end{document}